\documentclass[12pt]{elsarticle}
\usepackage[utf8]{inputenc}
\usepackage[T1]{fontenc}
\usepackage[english]{babel}
\usepackage[T1]{fontenc}
\usepackage{lmodern}
\usepackage{colortbl}
\usepackage{lastpage}
\usepackage{amscd}
\usepackage{algpseudocode}
\usepackage{amsmath}
\usepackage{multirow}
\usepackage{amsfonts}
\usepackage{amssymb}
\usepackage{latexsym}
\usepackage{mathrsfs}
\usepackage{float}
\usepackage{wrapfig}
\usepackage{color,framed}
\usepackage{longtable}
\usepackage{graphics}
\usepackage{graphicx}
\usepackage{amsmath}
\usepackage{acronym}
\usepackage{comment}

\usepackage{listings}
\newacro{TMN}{This Means Nothing}
\usepackage{mathdots}
\usepackage[left=3cm,right=3cm]{geometry}
\usepackage{ulem} 
\usepackage[colorlinks=true,linkcolor=blue,urlcolor=violet,bookmarks=true]{hyperref} 
\usepackage{array}
\newcolumntype{L}[1]{>{\raggedright\let\newline\\\arraybackslash\hspace{0pt}}m{#1}}
\newcolumntype{C}[1]{>{\centering\let\newline\\\arraybackslash\hspace{0pt}}m{#1}}
\newcolumntype{R}[1]{>{\raggedlef\left( t\let\newline\\\arraybackslash\hspace{0pt}}m{#1}}

\usepackage{color,xcolor,soul}


\def\dsum #1#2{\displaystyle{\sum_{#1}^{#2}}}

\newcommand{\Fq}{\mathbb{F}_{_q}}

\DeclareMathOperator{\Z}{Z}

\DeclareMathOperator{\aut}{Aut}

\DeclareMathOperator{\id}{Id}

\DeclareMathOperator{\Fix}{Fix}


\usepackage{ntheorem}
\theoremstyle{break}
{\theorembodyfont{\upshape}
	\newtheorem{definition}{Definition}
	\newtheorem{proposition}{Proposition} 
	\newtheorem{corollary}{Corollary}
	\newtheorem*{proof}{{Proof}}
	\newtheorem{example}{Example}
	
	\newtheorem{theorem}{Theorem}

	\newtheorem{remark}{Remark}
}	
\title{\LARGE\bfseries   $(n,\sigma)-$equivalence relation between skew constacyclic codes}

	\begin{document}

\begin{frontmatter}

\author[nm]{ Hassan Ou-azzou  \corref{cor1}}
        		\ead{hassan.ouazzou@edu.umi.ac.ma} 
        		\author[nm]{Mustapha Najmeddine \corref{cor2}}
        		\ead{m.najmeddine@umi.ac.ma}
                    \author[na]{Nuh Aydin\corref{cor3}}
        		\ead{aydinn@kenyon.edu}

                    \cortext[cor1]{Principal corresponding author}
        		\address[nm]{ Department of Mathematics, ENSAM--Meknes, Moulay Ismail University, Morocco}
        		\address[na]{ Department of Mathematics, Kenyon College, Gambier, OH 43022, USA}

\begin{abstract}
In this paper we generalize the notion of $n$-equivalence  relation introduced by Chen et al. in \cite{Chen2014} to classify constacyclic codes of length $n$ over a finite field $\Fq,$ where $q=p^r$ is a prime power, to the case of skew  constacyclic codes without derivation. We call this relation   $(n,\sigma)$-equivalence  relation, where $n$ is the length of the code and $ \sigma$ is  an automorphism of the finite field. We compute  the number of  $(n,\sigma)$-equivalence classes, and we give conditions on $ \lambda$ and $\mu$  for which $(\sigma, \lambda)$-constacyclic codes and  $(\sigma, \lambda)$-constacyclic codes are equivalent with respect to   our $(n,\sigma)$-equivalence  relation. Under some conditions on $n$ and $q$ we prove that skew constacyclic codes are equivalent to  cyclic codes.  We  also prove that  when $q$ is even and $\sigma$ is the Frobenius autmorphism, skew constacyclic codes of length $n$ are equivalent to cyclic codes when  $\gcd(n,r)=1$. Finally we give some examples as  applications of the theory developed here.
\end{abstract}
 \begin{keyword}
Contacyclic codes, cyclic codes,  skew constatcyclic codes, skew polynomials, equivalent codes.
 \end{keyword}
        	\end{frontmatter}
\section{Introduction}

Coding theory plays a fundamental role in various applications, such as error detection and correction, data transmission, data storage and reliable communication. It involves the study of efficient encoding and decoding methods for transmitting data reliably over noisy channels. Cyclic codes are one of the most important families of linear codes for both theoretical and practical reasons. They establish a key link between coding theory and algebra, and their structure makes them convenient for implementation. Many of the best-known codes are either cyclic or related to cyclic codes. Cyclic codes were introduced in the 1950s by Prange in \cite{Prange1957} as linear codes with the property that the cyclic shift of any codeword is another codeword. One of the important generalizations of cyclic codes is the class of constacyclic codes, first introduced in \cite{Berlekamp1968}. Like cyclic codes, constacyclic codes also are also conveniently implemented  using simple shift registers. Their rich algebraic structure makes them a major subject of study for theoretical reasons. There are useful generalizations of cyclic codes such as quasi-cyclic (QC) and quasi-twisted (QT) codes that have been used to obtain a large number of new linear codes with better parameters e.g., \cite{Nuh2013, Aydin2019, Aydin2001, Almendras2018, Ou-azzou2023}. One of the generalizations of cyclic codes is skew-cyclic codes, which were introduced in \cite{Boucher2007}. Their study necessitates the use of the Ore ring $R:=\Fq[x,\sigma]$, where $\sigma$ is an automorphism on $\Fq$, and the multiplication is defined by the extension of the rule $x \cdot a = \sigma(a) x$ for all $a \in \Fq$. 

For a nonzero element $\lambda$ of $\Fq$, $\lambda$-constacyclic codes of length $n$ over $\Fq$ are characterized as the ideals $\langle g(x) \rangle$ of the quotient ring $\mathbb{F}_q[x] / \langle x^n - \lambda \rangle$ where the generator polynomial $g(x)$ is the unique monic polynomial of minimum degree in the code, and it is a divisor of $x^n - \lambda$. Cyclic codes ($\lambda=1$) are a particular case of constacyclic codes.  Although the quotient space $R /R \cdot \langle x^n - \lambda \rangle$ is not always a ring, hence skew cyclic codes are not necessarily  ideals, they  can be viewed as left $R$-submodules $\langle g(x) \rangle$ of the left $R$-module $R /R \cdot \langle x^n - \lambda \rangle $ where $g(x)$ is a right divisor of $x^n - \lambda$. Other characterizations of skew cyclic and skew constacyclic codes can be found in \cite{Gluesing2021, Almendras2018, Boucher2007, Boucher2009}.

In a previous work \cite{Chen2014}, Chen et al. introduced an equivalence relation ``$\sim_n$'' called $n$-equivalence for the nonzero elements of $\mathbb{F}_q$ to classify constacyclic codes of length $n$ over $\mathbb{F}_q$ such that the constacyclic codes belonging to the same equivalence  class have the same minimum distance. For $ \lambda, \mu \in \Fq^*$,  $\lambda \sim_n \mu$ means   the existence of  a nonzero scalar $ a \in \Fq^*$  such that the map $\psi_{a}$ from the ring $\mathbb{F}_q[x] / \langle x^n - \mu \rangle$ to  $\mathbb{F}_q[x] / \langle x^n - \lambda \rangle$, defined by $ \psi_a(f(x))= f(ax)$ is an isometry with respect to the Hamming distance. Equivalently,   $\lambda \sim_n \mu$ if the polynomial $ \lambda x^n-\mu$ has at least one root in $\Fq[x].$  So,  it is  easy to relate the generator polynomial of a $\lambda$-constacyclic code $C=\langle g(x)\rangle$ with that of $\psi_a(C)= \langle g(ax)\rangle$.
Some papers of interest about the equivalence of constacyclic codes can be found in \cite{Aydin2020, Aydin2021, Chen2015, Chen2012, Dastbasteh2023}.

In this paper we generalize the notion of $n$-equivalence  relation introduced by Chen et al. in \cite{Chen2014} to classify constacyclic codes of length $n$ over finite field $\Fq$ to the case of skew  constacyclic codes without derivation. We call this relation   $(n,\sigma)$-equivalence  relation, where $n$ is the length of the code and $ \sigma$ is  an automorphism of the finite field. We compute  the number of  $(n,\sigma)$-equivalence classes, and we give conditions on $\lambda$ and $\mu$  for which $(\sigma, \lambda)$-constacyclic codes and  $(\sigma, \lambda)$-constacyclic codes are equivalent with respect to the $(n,\sigma)$-equivalence  relation. Under some conditions on $n$ and $q$ we prove that skew constacyclic codes are equivalent to  cyclic codes.  We  also prove that  when $q$ is even and $\sigma$ is the Frobenuis autmorphism, skew constacyclic codes of length $n$ are equivalent to cyclic codes if $ \gcd(n,r)=1$. Finally we give some examples as applications of the theory developed here.

The rest of this paper is organized as follows: Section \ref{S2} provides a review of the basic background on skew polynomial rings and skew constacyclic codes. In section \ref{S3}, we study the properties of the equivalence relation $(n,\sigma)$-equivalence. Section \ref{S4} investigates the equivalence between skew constacyclic codes over $\mathbb{F}_{2^r}$ using the properties of the $(n,\sigma)$-equivalence  relation. Finally, in section \ref{S5} we present some examples as applications of the theory developed here.

\section{Skew $(\sigma, \lambda)-$constacyclic codes} \label{S2}

 Let $\Fq$ be a finite field with $q$ elements, where $q=p^{r}$ for some prime number $p$ and $r \in \mathbb{Z}_{\geq 1}$, and let $\sigma : \Fq \rightarrow \Fq$ be an automorphism of $\Fq$. Recall that the \textit{Frobenius automorphism of $\Fq$} is defined as the automorphism $\theta : \Fq \rightarrow \Fq$ that maps $a$ to $a^{p}$ and  generates the Galois group $\aut(\Fq).$  Then each   automorphism  $\sigma \in \aut(\Fq)$ can be expressed as a power of $\theta$, which means there exists an integer $s$, $ 0\leq s < r, $ such that $\sigma(a)=a^{p^s} $ for all $a\in \Fq.$ It is important to note also  that $\theta $ keeps $\mathbb{F}_{p}$ fixed i.e., $\theta(\mathbb{F}_{p})= \mathbb{F}_{p}$, and generates the cyclic  group of automorphisms of $ \Fq$.  

  \begin{definition}[Skew  $(\sigma,\lambda)$-constacyclic code]
  Let $\lambda \in \Fq^*$   and  $\sigma \in \aut(\Fq)$  be an automorphism of $\Fq.$  A linear code $C\subseteq \Fq^n$ is called a skew  $(\sigma,\lambda)$-constacyclic code of length $n$ over $\Fq$ if for each codeword  $(c_0,c_1,\ldots, c_{n-1})$ of $C,$  we have 
$$
\left( \lambda  \sigma(c_{n-1}), \sigma(c_0), \sigma(c_1), \ldots, \sigma(c_{n-2})\right) \in C.
$$
For  $ \lambda=\pm 1$ we obtain  skew $\sigma$-cyclic codes and $ \sigma$-negacyclic codes as special cases. 
 \end{definition}
 
 Let us recall that  a  map $T: \Fq^{^n} \rightarrow \Fq^{^n}$ is called  $ \sigma $-semi linear transformation if it is an additive map satisfying   $ T(\alpha u)=\sigma( \alpha ) T(u)$ for $\alpha \in \Fq$ and  $u\in \Fq^n.$ In the particular case  $\sigma=\id, \ T$ is a linear transformation. We can obtain some important classes of codes from semi linear transformations. 
 
 \begin{remark}\label{remark1}
  Let $\lambda \in \Fq^*$   and  $\sigma \in \aut(\Fq)$  be an automorphism of $\Fq.$
 \begin{enumerate}
 \item  Skew $(\sigma,\lambda)$-constacyclic codes are invariant under the $\sigma-$semi linear transformation $ T_{\sigma,\lambda},$  called the $(\sigma,\lambda)-$constacyclic shift, defined by:
 $$ T_{\sigma,\lambda}(v_0,v_1,\ldots,v_{n-1} )=  \left( \lambda \sigma( v_{n-1}), \sigma(v_0), \sigma(v_1), \ldots, \sigma(v_{n-2})\right), \text{for all }\ v=(v_0,v_1,\ldots, v_{n-1}) \in \Fq^n.$$
 \item $\lambda$-constacyclic codes are invariant under the linear transformation $ T_{\lambda }$ defined by:
$$ T_{\lambda}(v_0,v_1,\ldots,v_{n-1} )=  \left(\lambda v_{n-1}, v_0, v_1, \ldots, v_{n-2}\right), \text{for all }\ v=(v_0,v_1,\ldots, v_{n-1}) \in \Fq^n.$$
 \item  When $\sigma= \id,$ we get  $ T_{\lambda}= T_{\sigma,\id}.$
 \end{enumerate}
 \end{remark}
To investigate the structure of skew  $(\sigma,\lambda)-$constacyclic codes we need to recall the notion of skew polynomials.
 A $\sigma$-skew polynomial ring over $\Fq$ is the set 
$$\Fq [x,\sigma]=\left\{a_{0}+a_{1} x+\ldots+a_{n-1} x^{n-1}: a_{i} \in \Fq, n \in \mathbb{Z}_{\geq 0} \right\}$$
endowed with the usual polynomial addition and the multiplication is defined by
$$ x a= \sigma(a) x .$$
In the following proposition, we collect some properties of the skew polynomial ring $\Fq [x,\sigma]$.

\begin{proposition} [\cite{Ore33},  p. 483-486] \label{PPP.1}
Let $s$ be  the order  of the automorphism $\sigma,$   and  denote by $ \Fix_{\sigma}(\Fq)$ the  subfield of $\Fq$ fixed by $\sigma$.
\begin{enumerate}
\item The ring $\Fq [x,\sigma]$  is, in general, a non-commutative ring unless $\sigma $ is the identity automorphism of $\Fq.$ 
\item   An element $f \in \Fq [x,\sigma] $ is central if and only if $f$ belongs to $ \Fix_{\sigma}(\Fq)[x^{s}]$ i.e.,  the centre of $ \Fq [x,\sigma]$   is $\Z(R)= \Fix_{\sigma}(\Fq)[x^{s}].$
\item Two-sided ideals of $\Fq [x,\sigma]$ are generated by elements of the form $ h(x^s) x^{m}$, where $m$ is an integer and $h(x)\in  \Fix_{\sigma}(\Fq) [x].$ 
\item The ring $\Fq [x,\sigma]$ is a right (resp. left)   Euclidean domain.
\end{enumerate}
\end{proposition}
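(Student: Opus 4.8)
The statement is classical (it is Ore's); the plan is to establish the four items in an order that lets the later ones rest on the earlier ones, taking the degree function and the division algorithm as the foundation.

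\emph{Items 1 and 4.} For item 1 the defining relation $xa=\sigma(a)x$ settles everything: if $\sigma\neq\id$ there is some $a$ with $\sigma(a)\neq a$, whence $xa\neq ax$, while if $\sigma=\id$ the ring collapses to the ordinary polynomial ring, which is commutative. For item 4 I would first record that the usual degree map is additive: writing $f=\sum a_i x^i$ and $g=\sum b_j x^j$ with leading terms $a_n x^n$ and $b_m x^m$, the top coefficient of $fg$ is $a_n\sigma^n(b_m)$, which is nonzero because $\Fq$ is a field and $\sigma$ an automorphism; hence $\deg(fg)=\deg f+\deg g$ and $R$ has no zero divisors. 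The right division algorithm then follows by induction on $\deg f$: to cancel the leading term of $f$ against $g$ one subtracts $c\,x^{\,n-m}g$ with $c=a_n\sigma^{\,n-m}(b_m)^{-1}$, which is legitimate precisely because $\sigma^{\,n-m}(b_m)\neq 0$; iterating lowers the degree until a remainder of degree $<\deg g$ survives. Left division is identical after rewriting the relation as $ax=x\sigma^{-1}(a)$. This makes $R$ a left and a right Euclidean domain, hence in particular a left and right principal ideal domain.

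\emph{Item 2.} To pin down the centre I would impose commutativity against the two kinds of generators. Commuting $f=\sum a_i x^i$ with an arbitrary scalar $b$ gives $fb=\sum a_i\sigma^i(b)x^i$ against $bf=\sum b a_i x^i$, so $a_i(\sigma^i(b)-b)=0$ for every $b$; thus whenever $a_i\neq 0$ one needs $\sigma^i=\id$, i.e. $s\mid i$. Commuting $f$ with $x$ gives $xf=\sum\sigma(a_i)x^{i+1}$ against $fx=\sum a_i x^{i+1}$, forcing $a_i\in\Fix_\sigma(\Fq)$. Together these conditions say exactly $f\in\Fix_\sigma(\Fq)[x^s]$, and the converse inclusion is a direct check, giving $\Z(R)=\Fix_\sigma(\Fq)[x^s]$.

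\emph{Item 3.} This is the step I expect to be the real obstacle. Let $I$ be a nonzero two-sided ideal; as a left ideal it is $I=Rf$ for a monic $f$ by item 4. First I would peel off the power of $x$: since $x\sum a_i x^i=\bigl(\sum\sigma(a_i)x^i\bigr)x$ and $\sigma$ is onto, one has $xR=Rx$, so $Rx^m$ is two-sided, and I let $x^m$ be the exact power of $x$ dividing the generator, writing $f=g\,x^m$ with $g$ not right-divisible by $x$. The heart of the matter is to show $g$ is central. Using that $I$ is also a right ideal, $fR\subseteq Rf$, so in particular $fa$ and $fx$ lie in $Rf$; comparing degrees forces $fa=u_a f$ and $fx=(cx+d)f$ with $u_a\in\Fq$ and $c,d\in\Fq$. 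Matching coefficients in $fa=u_af$ shows every exponent actually occurring in $g$ is congruent modulo $s$, while matching coefficients in $fx=(cx+d)f$ pins down the $\sigma$-action and drives the coefficients of $g$ into $\Fix_\sigma(\Fq)$; the expected outcome is $g=h(x^s)$ with $h\in\Fix_\sigma(\Fq)[x]$, which combined with the factored $x^m$ yields the asserted form $h(x^s)x^m$. The delicate points are tracking the twist $\sigma^{\,n-m}$ through the degree comparison and handling the non-unit $x$ cleanly, which is exactly why I isolate the power of $x$ at the outset. As an alternative one can argue through the centre: $R$ is a finitely generated free module over $\Z(R)=\Fix_\sigma(\Fq)[x^s]$, and away from the prime $(x)$ two-sided ideals correspond to ideals of $\Z(R)$, which makes the factor $h(x^s)$ transparent; but this route must separately account for the ramified behaviour at $x=0$ that produces the $x^m$ factor, so the elementary argument above is the safer one to carry out in full.
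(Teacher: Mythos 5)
The paper offers no proof of this proposition at all: it is quoted as classical background with a citation to Ore's 1933 paper, so there is no in-paper argument to compare yours against, and your self-contained derivation stands or falls on its own. It stands. Items 1, 2 and 4 are complete and correct: degree additivity via the leading coefficient $a_n\sigma^n(b_m)\neq 0$, the right division step with $c=a_n\sigma^{\,n-m}(b_m)^{-1}$ (and the left-handed analogue via $ax=x\sigma^{-1}(a)$), and the determination of the centre by testing commutation against the ring generators $\Fq$ and $x$. Item 3, which you rightly flag as the only substantive point, is presented as a sketch, but every deferred computation closes. Explicitly: with $f=\sum_i a_i x^i$ monic of degree $n$ and lowest occurring exponent $m$, the relation $fa=u_af$ gives $a_i\sigma^i(a)=u_a a_i$ for all $i$, and the leading coefficient identifies $u_a=\sigma^n(a)$, so $a_i\neq 0$ forces $\sigma^{i-n}=\id$, i.e.\ $i\equiv n\equiv m \pmod{s}$; in $fx=(cx+d)f$, comparing coefficients of $x^j$ gives $a_{j-1}=c\sigma(a_{j-1})+da_j$, whence $c=1$ from the top degree, $d=0$ from $j=m$ (where $a_{m-1}=0$ and $a_m\neq 0$), and then $a_{j-1}=\sigma(a_{j-1})$ puts every coefficient in $\Fix_{\sigma}(\Fq)$. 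Since all occurring exponents are congruent to $m$ modulo $s$, this yields $f=h(x^s)x^m$ with $h\in\Fix_{\sigma}(\Fq)[x]$, exactly as you predicted. Two minor points you should make explicit if you write this up in full: first, the passage from item 4 to ``left ideals are principal'' (take a minimal-degree element and use right division); second, the converse half of item 3 --- that every $h(x^s)x^m$ really does generate a two-sided ideal --- which follows in one line from item 2 together with your observation that $xR=Rx$.
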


\noindent  It is important to note that in the polynomial ring $R$, the evaluation of    skew polynomials  is different from the  evaluation of polynomials in the usual polynomial rings over  commutative rings.
\begin{definition} [\cite{Lam1988} ]\label{def.2}
For $a\in \Fq,$ and $f(x)\in R$,
  the right evaluation of $f(x)$ at $a$ is the element $f(a):=r$ such that $f(x)= q(x)(x-a)+r.$
\end{definition}
In \cite[Proposition 2.9]{Lam1988}, Lam and Leroy  proved that, for each $\alpha\in \Fq$ we have   
\begin{equation}\label{e1}
f(\alpha)= a_0 + a_1 N_1(\alpha) +\ldots+ a_{n-1} N_{n-1}(\alpha)=  \dsum{i=0}{n} a_i N_i(\alpha)
\end{equation}
where $  N_{0}(\alpha)=1 ,$ and  for  each $i$ in $\mathbb{N}^*$  
$$
N_i(\alpha)=\sigma^{i-1}(\alpha) \ldots \sigma(\alpha) \alpha.
$$
By taking  $\sigma(a)=a^{p^s} $ for all $a\in \Fq,$ we obtain  for each $ i \in \mathbb{N}^*$  that
\begin{equation}\label{Eqsigma}
N_i(\alpha)=\alpha^{p^{s (i-1)}} \alpha^{p^{s (i-2)}} \ldots  \alpha^{p^{s}} \alpha =  \alpha^{ \sum_{j=0}^{i-1} p^{sj}}= \alpha^{{\frac{ p^{si}-1}{p^s-1}}}.
\end{equation}
Set $[i]_{s}=  \dfrac{ p^{si}-1}{p^s-1}$. Then the  right evaluation of $f(x)=\dsum{i=0}{n} a_i x^i$ at $\alpha\in \Fq$ is given by 
\begin{equation}
f(\alpha)= \dsum{i=0}{n} a_i \alpha^{[i]_s}.
\end{equation}

\noindent Now we recall the following characterization of skew $(\sigma,\lambda)-$constacyclic  codes.
 \begin{theorem}[ Theorem 3, Theorem 4 in  \cite{Almendras2018}]\label{T2.3}
 A linear code  $ C\subseteq \Fq^{^n} $ is a skew $(\sigma,\lambda)-$constacyclic  code if and only if it is  a left ideal of $ \Fq[x,\sigma]/\langle x^n-\lambda\rangle.$  Moreover,
 \begin{enumerate}
 \item There is a monic polynomial of least degree $ g(x)\in \Fq[x,\sigma] $ such that $ g(x) $  right divides $ x^n-\lambda $
 and $ C=\langle g(x)\rangle. $
 \item The set $ \{ g(x),x g(x),\ldots,x^{k-1}g(x)\} $ forms a basis of $ C $ and the dimension of $ C $ is $ k=n-\deg(g). $

 \item A generator matrix $ G $ of $ C $ is given by:
 \begin{equation}\label{eq10}
 G= \left(
 \begin{array}{cccccccc}
 g_{_0} &g_{_1} &\cdots&g_{_{n-k}}& 0 &\cdots &\cdots & 0\\
 0 & \sigma(g_{_0}) & \sigma(g_{_1}) &\cdots & \sigma(g_{_{n-k}}) & 0 &\cdots & 0\\
 \vdots &\ddots &\ddots &\ddots & &\ddots & &\vdots\\
 \vdots & &\ddots &\ddots &\ddots & &\ddots &\vdots\\
 0 &\ldots & &0 & \sigma^{k-1}(g_{_0}) & \sigma^{k-1}(g_{_1}) &\ldots & \sigma^{k-1}(g_{_{n-k}})\\
 \end{array}
 \right) 
 \end{equation}
 where $k=n-\deg(g) $ and $ g(x)= \displaystyle\sum_{i=0}^{n-k} g_ix^i. $
\item There is a one-to-one correspondence between  right divisors of $x^n-\lambda$ and skew $(\sigma,\lambda)-$constacyclic  codes of length $n$ over $\Fq$.
\end{enumerate}
 \end{theorem}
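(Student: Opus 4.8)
The plan is to set up the standard dictionary between vectors and skew polynomials and then read off each claim from the (left and right) Euclidean structure of $R:=\Fq[x,\sigma]$ recorded in Proposition \ref{PPP.1}. First I would identify a word $(c_0,\dots,c_{n-1})\in\Fq^n$ with the class of $c_0+c_1x+\cdots+c_{n-1}x^{n-1}$ in the left $R$-module $M:=R/R(x^n-\lambda)$; note that $R(x^n-\lambda)$ is always a left submodule even when it fails to be two-sided, so $M$ is a well-defined left $R$-module and ``left ideal'' is read as ``left $R$-submodule.'' The crucial computation is that left multiplication by $x$ realizes the constacyclic shift: using $xa=\sigma(a)x$ and $x^n\equiv\lambda$ one obtains
\[
x\cdot\sum_{i=0}^{n-1}c_ix^i \equiv \lambda\sigma(c_{n-1})+\sum_{i=1}^{n-1}\sigma(c_{i-1})x^i,
\]
which is exactly $T_{\sigma,\lambda}(c_0,\dots,c_{n-1})$ from Remark \ref{remark1}. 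Since a linear code is closed under $\Fq$-scalars and every element of $R$ is an $\Fq$-combination of powers of $x$, invariance under $T_{\sigma,\lambda}$ is equivalent to closure under left multiplication by all of $R$; this yields the asserted equivalence between skew $(\sigma,\lambda)$-constacyclic codes and left $R$-submodules of $M$.

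For items (1) and (4) I would pass to the projection $\pi:R\to M$ and observe that $\pi^{-1}(C)$ is a left ideal of $R$ containing $R(x^n-\lambda)$. Because $R$ is left Euclidean (Proposition \ref{PPP.1}(4)) it is a left principal ideal domain, so $\pi^{-1}(C)=Rg$ for a unique monic $g$ of least degree in $\pi^{-1}(C)$; uniqueness holds since the difference of two monic generators of equal, minimal degree would be a nonzero element of smaller degree. From $x^n-\lambda\in Rg$ we get $x^n-\lambda=hg$ for some $h\in R$, i.e.\ $g$ right divides $x^n-\lambda$, and $C=\pi(Rg)=\langle g\rangle$. Conversely, every monic right divisor $g$ of $x^n-\lambda$ produces the submodule $\langle g\rangle$, and $g$ is recovered as its least-degree monic element because $\deg(fg)\ge\deg g$ for $f\ne 0$; these two assignments are mutually inverse, which gives the bijection in (4).

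For (2) and (3), write $\deg g=n-k$, so $\deg h=k$. Linear independence of $\{g,xg,\dots,x^{k-1}g\}$ is immediate: since $x^ig=\sum_{j}\sigma^i(g_j)x^{i+j}$ has degree $n-k+i$, the degrees $n-k,\dots,n-1$ are distinct and all $<n$, so no reduction modulo $x^n-\lambda$ occurs and the leading terms are staggered. To see that they span $C$, take any $\pi(fg)\in C$ and right-divide $f=qh+r$ with $\deg r<k$; then $fg=q(x^n-\lambda)+rg$, whence $\pi(fg)=\pi(rg)$ is an $\Fq$-combination of $g,xg,\dots,x^{k-1}g$. Thus they form a basis and $\dim C=k=n-\deg g$. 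Reading the coordinate vectors of $x^ig$, namely $\sigma^i(g_0),\dots,\sigma^i(g_{n-k})$ placed in positions $i,\dots,n-k+i$, directly produces the matrix $G$ of \eqref{eq10}.

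The step that needs the most care is the shift computation together with checking that left multiplication by $x$ on $M$ is well defined and matches $T_{\sigma,\lambda}$: one must track how $\sigma$ twists every coefficient and how $x^n\equiv\lambda$ reintroduces $\lambda\sigma(c_{n-1})$ in the constant position, all without assuming $R(x^n-\lambda)$ is two-sided. Everything else is then a routine consequence of the Euclidean algorithm; the only other point to keep straight is that the spanning argument uses right division (by $h$) while the generator comes from the left principal ideal property, both of which are available since $R$ is two-sidedly Euclidean.
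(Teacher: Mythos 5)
Your proof is correct, but there is nothing in the paper to compare it against: the paper does not prove this theorem, it simply quotes it as background from Theorem 3 and Theorem 4 of \cite{Almendras2018}. Your argument is the standard (and essentially the referenced) one, and all the steps check out: the identification of the $(\sigma,\lambda)$-constacyclic shift with left multiplication by $x$ on $R/R(x^n-\lambda)$, which only requires $R(x^n-\lambda)$ to be a left ideal (well-definedness of the $x$-action) and not a two-sided one; the passage to the preimage left ideal $\pi^{-1}(C)$ and its principality via the left Euclidean property of $\Fq[x,\sigma]$ recorded in Proposition \ref{PPP.1}; and the two divisions kept correctly apart---division with left quotient to get the monic generator $g$, and right division of $f$ by $h$ (where $x^n-\lambda=hg$) to show $\{g,xg,\ldots,x^{k-1}g\}$ spans, with linear independence from the staggered degrees $n-k,\ldots,n-1<n$. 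Two cosmetic remarks only: the bijection in item (4) should be stated for \emph{monic} right divisors (scalar multiples of a divisor generate the same code), a point the theorem statement itself also glosses over; and your recovery of $g$ as the least-degree monic element of $\langle g\rangle$ quietly uses the spanning argument you give afterwards, so in a written-up version that reduction (every element of $\langle g\rangle$ has a degree-$<n$ representative of the form $rg$ with $\deg r<k$) should be stated before item (4) rather than after it.
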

\begin{definition}[ Section 1.6, and 1.7 in \cite{Huffman2003}] Let $C_1$ and $C_2$ be two linear codes of length $n$ over $\Fq$. Then $C_1$ and $C_2$ are called \textbf{ equivalent} if there exists an $n \times n$ monomial matrix matrix $P$ such that $C_1=C_2 P$.
\end{definition} 

\begin{remark}
Two equivalent codes have the same minimum Hamming distance and in general same parameters. 
\end{remark}
In the following theorem  we show that in some cases skew $(\sigma,\lambda)-$constacyclic  codes are the same as ordinary $\lambda$-constacyclic codes.

\begin{theorem}\label{cyclic}
Let $\sigma $ be an automorphism of $\Fq$  of order $m$ and $C$ be a skew $(\sigma,\lambda)$-constacyclic  code of length $n$  over $\Fq$.  If $\gcd(m, n)=1$ and $\lambda\in \Fix_{\sigma}(\Fq)$ then $C$ is a $\lambda$-constacyclic code of length $n$.
\end{theorem}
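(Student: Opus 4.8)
The plan is to translate the statement into the language of the shift operators from Remark \ref{remark1} and then exploit the arithmetic of their orders. Recall that $C$ is skew $(\sigma,\lambda)$-constacyclic precisely when it is invariant under $T_{\sigma,\lambda}$, and that $C$ is an ordinary $\lambda$-constacyclic code precisely when it is invariant under the linear shift $T_\lambda$. So the whole task is to show that invariance under $T_{\sigma,\lambda}$, together with $\gcd(m,n)=1$ and $\lambda\in\Fix_{\sigma}(\Fq)$, forces invariance under $T_\lambda$. The first step is to factor the skew shift as $T_{\sigma,\lambda}=\Sigma\circ T_\lambda=T_\lambda\circ\Sigma$, where $\Sigma$ is the componentwise map $\Sigma(v_0,\dots,v_{n-1})=(\sigma(v_0),\dots,\sigma(v_{n-1}))$. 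The equality $T_\lambda\circ\Sigma=T_{\sigma,\lambda}$ holds unconditionally, whereas $\Sigma\circ T_\lambda=T_{\sigma,\lambda}$ is exactly where the hypothesis $\sigma(\lambda)=\lambda$ enters; together they show that $\Sigma$ and $T_\lambda$ commute.

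Once the two operators commute, I would record two elementary identities: $\Sigma^{m}=\id$, because $\sigma$ has order $m$, and $T_\lambda^{\,n}=\lambda\,\id$, because $T_\lambda$ is the companion operator of $x^{n}-\lambda$ (equivalently, multiplication by $x$ on $\Fq[x]/\langle x^{n}-\lambda\rangle$ satisfies $x^{n}=\lambda$). The arithmetic heart of the argument is then a Chinese Remainder step: since $\gcd(m,n)=1$, I can choose a positive integer $k_0$ with $k_0\equiv 0 \pmod m$ and $k_0\equiv 1\pmod n$, say $k_0=1+\ell n$ with $\ell\ge 0$. Because the operators commute, $T_{\sigma,\lambda}^{\,k_0}=\Sigma^{k_0}\,T_\lambda^{\,k_0}$; the first factor is $\id$ (as $m\mid k_0$), and the second is $T_\lambda^{\,1+\ell n}=\lambda^{\ell}\,T_\lambda$ (using $T_\lambda^{\,n}=\lambda\,\id$). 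Hence $T_{\sigma,\lambda}^{\,k_0}=\lambda^{\ell}\,T_\lambda$: a suitable power of the skew shift is a nonzero scalar multiple of the ordinary constacyclic shift.

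To finish, take any $c\in C$. Invariance of $C$ under $T_{\sigma,\lambda}$ gives $T_{\sigma,\lambda}^{\,k_0}(c)=\lambda^{\ell}\,T_\lambda(c)\in C$, and since $C$ is linear and $\lambda^{\ell}\in\Fq^{*}$ we may divide by the scalar to conclude $T_\lambda(c)\in C$. Thus $C$ is invariant under $T_\lambda$, i.e.\ $C$ is a $\lambda$-constacyclic code of length $n$. I expect the main (really the only) subtle point to be the operator factorization, namely checking that $\lambda\in\Fix_{\sigma}(\Fq)$ is exactly what lets $\Sigma$ and $T_\lambda$ commute, so that $T_{\sigma,\lambda}^{\,k_0}$ splits cleanly as $\Sigma^{k_0}T_\lambda^{\,k_0}$; after that, the CRT choice of $k_0$ and the linearity of $C$ make the conclusion immediate. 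It is worth noting that $\gcd(m,n)=1$ is used only through the solvability of these congruences, and that passing from $\lambda^{\ell}T_\lambda$ back to $T_\lambda$ relies on linearity rather than on any assumption about the order of $\lambda$.
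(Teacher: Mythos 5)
Your proposal is correct and is essentially the paper's own argument rewritten in operator notation: your CRT choice $k_0=1+\ell n\equiv 0\ (\mathrm{mod}\ m)$ is the paper's Bezout identity $\alpha_1 m=1+\alpha_2 n$, and your key identity $T_{\sigma,\lambda}^{\,k_0}=\lambda^{\ell}\,T_{\lambda}$ is precisely the paper's computation that $x^{\alpha_1 m}c(x)=\lambda^{\alpha_2}\left(\lambda c_{n-1}+c_0x+\cdots+c_{n-2}x^{n-1}\right)$ in $\Fq[x,\sigma]/\langle x^n-\lambda\rangle$, followed by the same division by the nonzero scalar. The only difference is presentational: you isolate the factorization $T_{\sigma,\lambda}=\Sigma\circ T_{\lambda}=T_{\lambda}\circ\Sigma$ and its commutativity (which is exactly where $\sigma(\lambda)=\lambda$ enters) as a standalone lemma, whereas the paper performs the equivalent coefficient computation directly, using $\sigma^{\alpha_1 m}=\mathrm{id}$ and $\sigma^{j}(\lambda)=\lambda$ inside the expansion.
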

\begin{proof}  Let $C=\langle g(x)\rangle$ be a skew $(\sigma,\lambda)$-constacyclic  code of length $n$ such that $\gcd(m, n)=1$. As shown in  \cite[Theorem 16]{Siap2011}, we know that there exist integers $\alpha_1, \alpha_2$ such that
 $\alpha_1 m=1+\alpha_2 n$, where $\alpha_2>0$. 
 
 Let $c(x)=c_0+c_1 x+\ldots+c_{n-1} x^{n-1} \in C$. We need to show that 
 $$ \lambda  c_{n-1}  +c_0 x+\ldots+c_{n-2} x^{n-1}  \in C.$$
Since $C$ is a skew $(\sigma,\lambda)$-constacyclic  code,  $x^{\alpha_1 m} c(x) \in C$ and   we have 
 $$
\begin{aligned}
  x^{\alpha_1 m} c(x) & = x^{1+\alpha_2 n } (c_0+c_1 x+\ldots+c_{n-1} x^{n-1}) \\
& =   \sigma^{1+\alpha_2 n}(c_0 ) x^{1+\alpha_2 n} +   \sigma^{1+\alpha_2 n}(c_1 ) x^{2+\alpha_2 n}  + ...+  \sigma^{1+\alpha_2 n}(c_{n-1} ) x^{n+\alpha_2 n}\\
&  =  \sigma^{\alpha_1 m}(c_0 ) x \lambda^{\alpha_2} +   \sigma^{m \alpha_1}(c_1 ) x^{2} \lambda^{\alpha_2}  + ...+ \sigma^{\alpha_1 m}(c_{n-1} ) \lambda^{1+\alpha_2}\\
&   =  (\lambda c_{n-1} + c_0  x +  c_1  x^{2}  + ...+ c_{n-2} x^{n-1}) \lambda^{\alpha_2}\\
&   =   \lambda^{\alpha_2} (\lambda c_{n-1} + c_0  x +  c_1  x^{2}  + ...+ c_{n-2} x^{n-1}) \in C \ \ \text{since} \ \  \lambda \in \Fix_{\sigma}(\Fq). \\
\end{aligned}
$$
Therefore, $ \lambda c_{n-1} + c_0  x +  c_1  x^{2}  + ...+ c_{n-2} x^{n-1} \in C , $ and hence $C$ is a $\lambda$-constacyclic code.
\qed \end{proof} 
 
\begin{theorem}  Let $\sigma $ be an automorphism of $\Fq$  of order $m$ and  $C$ be a skew $(\sigma,\lambda)$-constacyclic  code of length $n$ over $\Fq$. If $\gcd(m, n)=d$ and $\lambda\in \Fix_{\sigma}(\Fq)$ then $C$ is a  quasi-twisted code of length $n$ and index $d$ with shift constant $\lambda$.
\end{theorem}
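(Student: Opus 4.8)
The plan is to imitate the proof of Theorem~\ref{cyclic} almost verbatim, replacing the exponent $1$ by $d$ throughout. First I would record the arithmetic input: since $d=\gcd(m,n)$, Bézout's identity together with a positivity adjustment (adding a suitable multiple of $n/d$ to one coefficient and of $m/d$ to the other) yields integers $\alpha_1>0$ and $\alpha_2>0$ with $\alpha_1 m = d+\alpha_2 n$. This is the exact analogue of the relation $\alpha_1 m = 1+\alpha_2 n$ used before, and it is the only place where the hypothesis $\gcd(m,n)=d$ enters.

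Next, I would fix an arbitrary codeword $c(x)=c_0+c_1x+\cdots+c_{n-1}x^{n-1}\in C$. Because $C$ is a left ideal of $\Fq[x,\sigma]/\langle x^n-\lambda\rangle$ by Theorem~\ref{T2.3}, the element $x^{\alpha_1 m}c(x)$ again lies in $C$. I would then compute this product in the skew ring exactly as in the previous proof: pushing $x^{\alpha_1 m}$ past each coefficient produces the twist $\sigma^{\alpha_1 m}=(\sigma^m)^{\alpha_1}=\id$, so every $c_i$ is left unchanged; and since $\lambda\in\Fix_{\sigma}(\Fq)$ it commutes with $x$, so $x^n=\lambda$ is central and $x^{\alpha_2 n}=\lambda^{\alpha_2}$. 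Substituting $\alpha_1 m = d+\alpha_2 n$ then gives $x^{\alpha_1 m}c(x)=\lambda^{\alpha_2}\sum_{i=0}^{n-1}c_i x^{d+i}$.

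Finally I would reduce $\sum_{i} c_i x^{d+i}$ modulo $x^n-\lambda$ and read off the coordinate vector: the terms with $d+i<n$ contribute $c_i$ in position $d+i$, while the $d$ terms with $n\le d+i<2n$ wrap around and contribute $\lambda c_i$ in positions $0,\ldots,d-1$. This is precisely $T_{\lambda}^{\,d}(c)=(\lambda c_{n-d},\ldots,\lambda c_{n-1},c_0,\ldots,c_{n-d-1})$, the $d$-fold iterate of the ordinary $\lambda$-constacyclic shift of Remark~\ref{remark1}. Hence $x^{\alpha_1 m}c(x)=\lambda^{\alpha_2}\,T_{\lambda}^{\,d}(c)\in C$, and since $\lambda^{\alpha_2}\ne 0$ we obtain $T_{\lambda}^{\,d}(c)\in C$. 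As $c$ was arbitrary and $d\mid n$, the code $C$ is invariant under $T_{\lambda}^{\,d}$, i.e.\ a quasi-twisted code of index $d$ with shift constant $\lambda$.

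The computation in the skew ring is routine; the step deserving the most care is the last one, namely matching the reduced skew product against the commutative shift $T_{\lambda}^{\,d}$. The point is that the $\sigma$-action disappears only because $\alpha_1 m$ is a multiple of the order $m$, and the wrap-around factor comes out as the scalar $\lambda$ rather than $\sigma^{j}(\lambda)$ precisely because $\lambda$ is fixed by $\sigma$; these two facts are exactly what let the inherently non-commutative shift collapse to the classical quasi-twisted shift. I would also verify the boundary count $d\le n$, which holds since $d\mid n$, so that each monomial wraps at most once.
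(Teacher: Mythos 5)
Your proposal is correct and follows essentially the same route as the paper: both arguments hinge on writing $\alpha_1 m = d+\alpha_2 n$, then using $\sigma^{\alpha_1 m}=\id$ together with $\sigma(\lambda)=\lambda$ to conclude that $\lambda^{\alpha_2}\,T_{\lambda}^{d}(c)\in C$, hence $T_{\lambda}^{d}(c)\in C$. The only cosmetic difference is that the paper phrases the computation through iterates of the semi-linear shift, via the identity $T_{\lambda,\sigma}^{k}(c)=T_{\lambda}^{k}(\sigma^{k}(c))$, whereas you multiply $c(x)$ on the left by $x^{\alpha_1 m}$ in $\Fq[x,\sigma]/\langle x^n-\lambda\rangle$ and reduce — the same operation in polynomial language, exactly as in the paper's own proof of Theorem~\ref{cyclic}.
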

\begin{proof}  Let $C=\langle g(x)\rangle$ be a skew $(\sigma,\lambda)-$constacyclic  code of length $n$ over $\Fq$ such that $\gcd(m, n)=d$. We know that there exist integers $\alpha_1, \alpha_2$ such that
 $\alpha_1 m=d+\alpha_2 n$, where $\alpha_2>0$.

 Let  $n= d s$ and 
 $$c	= ( c_{_{0,0}},c_{_{0,1}}, \ldots,c_{_{0,d-1}},	 c_{_{1,0}},c_{_{1,1}},\ldots, c_{_{1,d-1}} ,\ldots, c_{_{s-1,0}}, c_{_{s-1,1}},\ldots, c_{_{s-1,d-1}}) \in C.$$
We will show that $$ T_{\lambda}^{d}(c) = ( \lambda c_{_{s-1,0}},\lambda c_{_{s-1,1}},\ldots,\lambda c_{_{s-1,d-1}}, c_{_{0,0}},c_{_{0,1}}, \ldots, c_{_{1,d-1}} ,\ldots, c_{_{s-2,0}}, c_{_{s-1,1}},\ldots, c_{_{s-2,d-1}}) $$
where  $ T_{\lambda}$ is the $  \lambda$-constacyclic shift defined in Remark \ref{remark1}. Let    $T_{\lambda,\sigma}$ be the $\sigma$-semi-linear transformation defined by 
$$  T_{\lambda,\sigma} (c_0, c_1, \ldots, c_{n-1}) = ( \lambda \sigma(c_{n-1}), \sigma(c_{0}), \ldots, \sigma(c_{n-2})) = T_{\lambda} ( \sigma(c_{n-1}), \sigma(c_{0}), \ldots, \sigma(c_{n-2})) .$$  We know that  $C$ is  a skew $ \lambda$-constacyclic code if $  T_{\lambda,\sigma}(c) \in C$ for each $c\in C$, and  for each positive integer $k$ we have  $ T_{\lambda,\sigma}^k (c) =  T_{\lambda}^k (\sigma^k(c)) .$ It follows that 
 \begin{equation}
 T_{\lambda,\sigma}^{d +\alpha_2 n}(c)=    T_{\lambda}^{ d +\alpha_2 n} (\sigma^{ d +\alpha_2 n}(c))= T_{\lambda}^{ d +\alpha_2 n} (\sigma^{\alpha_1 m}(c))=T_{\lambda}^{ d +\alpha_2 n} (c)=   \lambda^{\alpha_2} T_{\lambda}^{ d }(c) \in C.
\end{equation}
Hence,   $ T_{\lambda}^{ d }(c) \in C $ and so  we obtain the  desired result.
\qed \end{proof} 



\section{ Equivalence between  skew constacyclic codes}  \label{S3}

In \cite{Chen2014} Chen et al.   introduced the notion  of $n$-equivalence  relation  on $\Fq^*$ for the purpose of  classifying  ordinary constacyclic codes of length $n$ over $\Fq.$ In this section  we generalize this  notion to the case of skew constacyclic codes  by defining  an equivalence relation on  $\Fq^*$ called $(n,\sigma)$-equivalence relation, where   $\sigma \in \aut(\Fq)$ is an automorphism of $\Fq.$ 

 \begin{definition} [$(\sigma,n)-$equivalence relation]
Let $n$ be  a positive integer. We say that  $\lambda$ and $\mu$  are  $(n,\sigma)$-equivalent in $\mathbb{F}_q^*$, denoted by $\lambda \sim_{n,\sigma} \mu$, if the skew polynomial $\lambda x^n-\mu \in \Fq[x,\sigma]$ has at least one root in $\Fq^*$, i.e.,  there is $\alpha\in \Fq^{*}$ such that $ \lambda  N_{n}(\alpha)  = \mu.$
\end{definition}

\begin{proposition}
The relation $\sim_{n,\sigma}$ is an equivalence relation on $\Fq^*$.
\end{proposition}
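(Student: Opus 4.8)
The plan is to establish reflexivity, symmetry, and transitivity, each of which will follow immediately once we record the key structural fact about the map $N_n$. By the explicit formula in \eqref{Eqsigma}, for every $\alpha \in \Fq^*$ we have $N_n(\alpha)=\alpha^{[n]_s}$ with $[n]_s=\frac{p^{sn}-1}{p^s-1}$, so $N_n$ is simply a power map on the multiplicative group $\Fq^*$. Since $\Fq$ is commutative, this shows at once that $N_n\colon \Fq^*\to \Fq^*$ is a group homomorphism, i.e. $N_n(\alpha\beta)=N_n(\alpha)N_n(\beta)$ and $N_n(\alpha^{-1})=N_n(\alpha)^{-1}$ for all $\alpha,\beta\in\Fq^*$. (Alternatively, one can read multiplicativity directly off the defining product $N_n(\alpha)=\sigma^{n-1}(\alpha)\cdots\sigma(\alpha)\,\alpha$, using that each $\sigma^j$ is a field automorphism and that the factors commute.) Reflexivity is then the choice $\alpha=1$: since $N_n(1)=1$, we get $\lambda N_n(1)=\lambda$, hence $\lambda\sim_{n,\sigma}\lambda$ for every $\lambda\in\Fq^*$.

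For symmetry, I would suppose $\lambda\sim_{n,\sigma}\mu$, so that $\lambda N_n(\alpha)=\mu$ for some $\alpha\in\Fq^*$, and take the witness $\beta=\alpha^{-1}$; using multiplicativity,
\[
\mu\, N_n(\alpha^{-1})=\mu\,N_n(\alpha)^{-1}=\bigl(\lambda N_n(\alpha)\bigr)N_n(\alpha)^{-1}=\lambda,
\]
so $\mu\sim_{n,\sigma}\lambda$. For transitivity, if $\lambda N_n(\alpha)=\mu$ and $\mu N_n(\beta)=\nu$ for some $\alpha,\beta\in\Fq^*$, then
\[
\lambda\, N_n(\alpha\beta)=\lambda\,N_n(\alpha)N_n(\beta)=\mu\,N_n(\beta)=\nu,
\]
so $\lambda\sim_{n,\sigma}\nu$, witnessed by $\alpha\beta\in\Fq^*$.

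There is no genuine obstacle here: the entire argument rests on the single observation that $N_n$ is multiplicative, which is precisely why it is worth isolating that fact at the outset. The only point meriting any care is that all three verifications must produce a \emph{nonzero} witness in $\Fq^*$, which is automatic since $1$, $\alpha^{-1}$, and $\alpha\beta$ are all nonzero whenever $\alpha,\beta\in\Fq^*$; equivalently, one should confirm that $N_n$ genuinely maps $\Fq^*$ into $\Fq^*$ rather than hitting $0$, which again is clear from the power-map description.
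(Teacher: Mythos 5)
Your proposal is correct and follows essentially the same route as the paper: both proofs use the witnesses $1$, $\alpha^{-1}$, and $\alpha\beta$ for reflexivity, symmetry, and transitivity respectively, all resting on the multiplicativity of $N_n$ (equivalently, the power-map description $N_n(\alpha)=\alpha^{[n]_s}$). Your version is slightly cleaner in that it isolates the homomorphism property of $N_n$ up front, which the paper uses implicitly.
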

\begin{proof}
\begin{enumerate}
\item  $\lambda \sim_{n,\sigma} \lambda $ since $ 1$ is a  root of $\lambda x^n-\lambda$  in $\Fq$.
\item $\lambda \sim_{n,\sigma} \mu \iff$  there exists $\alpha \in \Fq$ such that $\alpha$ is a root of the polynomial $f(x)=\lambda x^n-\mu$. By equation (1), this is equivalent to    $$\lambda N_n(\alpha)= \mu  \iff  \lambda = \mu  N_n(\alpha^{-1}) \iff \mu \sim_{n,\sigma} \lambda. $$  
\item  If $\lambda \sim_{n,\sigma} \mu$ and $\mu \sim_{n,\sigma} \xi$ then there are $\alpha , \beta $ in $\Fq^*$ such that 
$$\lambda N_n(\alpha)= \mu   \ \ \text{and} \ \  \mu N_n(\beta)= \xi $$
Therefore  $$\lambda N_n(\alpha)=  N_n(\beta^{-1})  \xi $$
and so 
   $$\lambda N_n(\alpha)  N_n(\beta) =  \lambda N_n(\alpha \beta)  = \xi.$$  
Hence $ \lambda \sim_{n,\sigma} \xi.$ 
\end{enumerate}
\qed \end{proof}

The following theorem gives  characterizations of $(n,\sigma)$-equivalence  between two elements  $\lambda$ and $\mu$ of $\Fq^{*}.$
 
\begin{theorem}\label{Th.1}
Let   $\sigma \in \aut(\Fq)$ be an automorphism of $\Fq$  such that  $\sigma(a)=a^{p^s}$ for all $a\in \Fq$,     and  let $\lambda, \mu \in \mathbb{F}_q^*$. The following four statements are equivalent:
\begin{enumerate}
\item $\lambda \sim_{n,\sigma} \mu$ 
\item  There is an element $\alpha$ in $\mathbb{F}_q^*$ such that $ \lambda N_n(\alpha) =\mu $  and the following map 


\begin{equation*}
  	\begin{array}{cccc}
  	\varphi_{\alpha}:& \mathbb{F}_q[x,\sigma] /\langle x^n-\mu\rangle  &\longrightarrow & \mathbb{F}_q[x,\sigma] /\langle x^n-\lambda\rangle, \\ & & & \\
  	& f(x)& \longmapsto &  f(\alpha x)  
  	\end{array}
  	\end{equation*} 
is an $\mathbb{F}_q$-algebra isometry  with respect to the Hamming distance.

\item $\lambda^{-1} \mu \in \langle N_n(\xi) \rangle $, where  $\xi$ is a primitive element of $\Fq.$ 

\item $\lambda^{-1} \mu \in \langle \xi^{[n]_s} \rangle $, where  $\xi$ is a primitive element of $\Fq.$ 
\item $\left(\lambda^{-1} \mu\right)^d=1$, where $d=\dfrac{q-1}{\operatorname{gcd}([n]_s, q-1)}$.
\end{enumerate}
In particular, the number of $(n,\sigma)$-equivalence classes in $\mathbb{F}_q^*$ is equal to $\operatorname{gcd}([n]_s, q-1)$.
\end{theorem}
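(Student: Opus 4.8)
The plan is to establish the cycle of equivalences $(1)\Leftrightarrow(2)\Leftrightarrow(3)\Leftrightarrow(4)\Leftrightarrow(5)$, and then extract the counting statement from $(5)$. The conceptual heart is recognizing that $N_n(\alpha)=\alpha^{[n]_s}$ by equation \eqref{Eqsigma}, so the existence of a root of $\lambda x^n-\mu$ reduces to asking whether $\lambda^{-1}\mu$ lies in the image of the power map $\alpha\mapsto\alpha^{[n]_s}$ on the cyclic group $\Fq^*$. This is a purely multiplicative-group question, and most of the equivalences are then reformulations of the statement ``$\lambda^{-1}\mu$ is an $[n]_s$-th power in the cyclic group $\Fq^*$.''

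First I would handle $(1)\Leftrightarrow(2)$. The forward direction of $(1)$ gives $\alpha$ with $\lambda N_n(\alpha)=\mu$; I would then check directly that $\varphi_\alpha$ is well-defined (it sends the relation $x^n-\mu$ to $\alpha^{[n]_s}x^n-\mu=\lambda^{-1}\mu\,\lambda x^n-\mu$, hence to a multiple of $x^n-\lambda$), that it is an $\Fq$-algebra homomorphism using the skew multiplication rule $xa=\sigma(a)x$, and that it is bijective with inverse $\varphi_{\alpha^{-1}}$. The key point is that substituting $x\mapsto\alpha x$ multiplies the degree-$i$ coefficient by $N_i(\alpha)=\alpha^{[i]_s}$, a nonzero scalar, so $\varphi_\alpha$ permutes monomials up to nonzero scaling and therefore preserves the Hamming weight; this gives the isometry. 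Conversely, $(2)$ presupposes the existence of such an $\alpha$, so it trivially implies $(1)$.

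Next I would dispatch $(3)\Leftrightarrow(4)$ immediately: since $\xi$ is primitive and $N_n(\xi)=\xi^{[n]_s}$ by \eqref{Eqsigma}, the two cyclic subgroups $\langle N_n(\xi)\rangle$ and $\langle \xi^{[n]_s}\rangle$ are literally equal. For $(1)\Leftrightarrow(3)$, writing every element of $\Fq^*$ as a power of $\xi$, the condition $\lambda^{-1}\mu=N_n(\alpha)=\alpha^{[n]_s}$ for some $\alpha$ says exactly that $\lambda^{-1}\mu$ is an $[n]_s$-th power, i.e.\ lies in the subgroup $\langle\xi^{[n]_s}\rangle=\langle N_n(\xi)\rangle$. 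For $(4)\Leftrightarrow(5)$ I would use the standard fact that in a cyclic group of order $q-1$, the subgroup of $k$-th powers is $\langle\xi^{\gcd(k,q-1)}\rangle$, which has order $d=(q-1)/\gcd([n]_s,q-1)$; membership of an element in this subgroup is equivalent to that element's $d$-th power being $1$, since the $d$-th powers form the unique subgroup of index $d$.

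Finally, for the count: the number of equivalence classes equals the index of the subgroup $\langle N_n(\xi)\rangle$ in $\Fq^*$, because $\lambda\sim_{n,\sigma}\mu$ iff $\lambda$ and $\mu$ lie in the same coset of this subgroup (this follows from transitivity and the characterization in $(3)$). That index is $(q-1)/|\langle\xi^{[n]_s}\rangle|=(q-1)/d=\gcd([n]_s,q-1)$. The main obstacle, and the step deserving the most care, is the well-definedness and isometry verification in $(1)\Leftrightarrow(2)$: one must check that $\varphi_\alpha$ respects the \emph{noncommutative} skew multiplication and that the induced map on quotients is a genuine ring (not merely linear) isomorphism, rather than taking the classical commutative argument of \cite{Chen2014} for granted. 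Everything else reduces to elementary cyclic-group arithmetic once the identity $N_n(\alpha)=\alpha^{[n]_s}$ is in hand.
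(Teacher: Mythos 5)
Your proposal is correct, and its skeleton is the same as the paper's: everything funnels through the identity $N_n(\alpha)=\alpha^{[n]_s}$, so that $(1),(3),(4),(5)$ become elementary statements about powers and subgroups of the cyclic group $\Fq^*$, and the class count is the index of $\langle \xi^{[n]_s}\rangle$. The one genuine difference is how you handle the only nontrivial step, the verification that $\varphi_\alpha$ is an $\Fq$-algebra isometry in $(1)\Rightarrow(2)$. The paper does this at the level of residue representatives: it multiplies $f(x)=\sum a_ix^i$ and $g(x)=\sum b_jx^j$, reduces via $x^{j+n}\equiv\sigma^j(\mu)x^j$, and matches the two double sums using the Cherchem--Leroy identity $N_i(\alpha)\,\sigma^i\bigl(N_{j-i}(\alpha)\bigr)=N_j(\alpha)$ together with $\lambda N_n(\alpha)=\mu$. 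You instead lift to the skew polynomial ring: $x\mapsto\alpha x$ respects the defining relation $xa=\sigma(a)x$ (since $(\alpha x)a=\alpha\sigma(a)x=\sigma(a)(\alpha x)$), hence extends to an $\Fq$-algebra endomorphism of $\Fq[x,\sigma]$, and it sends $x^n-\mu$ to $N_n(\alpha)(x^n-\lambda)$, so it carries the left ideal $\langle x^n-\mu\rangle$ onto $\langle x^n-\lambda\rangle$ and descends to the quotients, with inverse induced by $x\mapsto\alpha^{-1}x$. This is shorter, makes bijectivity transparent, and avoids the double-sum computation; the paper's computation, on the other hand, verifies the multiplicative property directly on the representative-level product, which is the structure actually being used (recall the quotient need not be an honest ring when the ideal is not two-sided, a point both you and the paper leave implicit). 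Your logical organization also differs harmlessly: you get $(2)\Rightarrow(1)$ for free from the conjunct $\lambda N_n(\alpha)=\mu$ in statement $(2)$, whereas the paper closes the cycle via $(2)\Rightarrow(3)$ by re-extracting $\mu=N_n(\alpha)\lambda$ from $\varphi_\alpha(x^n)$. One small wording slip: in $(4)\Leftrightarrow(5)$ the relevant fact is that the elements satisfying $x^d=1$ form the \emph{unique subgroup of order $d$} (the kernel of the $d$-power map), not that ``the $d$-th powers form the unique subgroup of index $d$''; since $\langle\xi^{[n]_s}\rangle$ has order $d$, uniqueness gives the equivalence, so your conclusion stands.
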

\begin{proof}
    \begin{enumerate}
       \item (1) $\Rightarrow$ (2). 
       Suppose that $ \lambda \sim_{n,\sigma} \mu $ then there is $ \alpha \in \Fq$ such that $ \lambda N_n(\alpha)=\mu$. We will  prove that $ \varphi_{\alpha}$ is an $\Fq$-algebra isometry. Let us first show that  $ \varphi_{\alpha}$ is an $ \Fq$-algebra isomorphism. Let $ f(x)=\dsum{i=0}{n-1} a_ix^i$ and $g(x)=  \dsum{i=0}{n-1} b_ix^i$  in $ \Fq[x,\sigma]/\langle x^n-\mu \rangle.$ On the one hand we have 

           $$
           \begin{array}{rl}
f(x)g(x) &=\dsum{j=0}{n-1}\dsum{i=0}{j} a_i \sigma^i\left(b_{j-i}\right)  x^j+ 
\dsum{j=0}{n-1} \dsum{i=j+1}{n-1} a_i \sigma^i\left(b_{n-i+j}\right)  x^{j+n} \\
& = \dsum{j=0}{n-1}\left(\dsum{i=0}{j} a_i \sigma^i\left(b_{j-i}\right) +  \dsum{i=j+1}{n-1} a_i  \sigma^i(b_{n-i+j}) \sigma^j(\mu) \right) x^j,  \\ 
&  
(\text{ since  $x^{j+n}= \sigma^j(\mu) x^j (\mod x^n-\mu) .$  })

\\
\end{array}
$$

As $\varphi_\alpha\left(x^j\right)=N_j(\alpha) x^j$, we obtain 
$$
\begin{array}{rl}
\varphi_\alpha(f(x)g(x)) & =\dsum{j=0}{n-1}\left(\dsum{i=0}{j} a_i \sigma^i\left(b_{j-i}\right)  + \dsum{i=j+1}{n-1} a_i \sigma^i(b_{n-i+j}) \sigma^j(\mu) \right) N_j(\alpha) x^j \\
& \\

\end{array}
$$
On the other hand we have:
$$
\begin{array}{rl}
   \varphi_\alpha(f(x))\varphi_\alpha(g(x))=  &   \dsum{j=0}{n-1}\left(\dsum{i=0}{j} a_i N_i(\alpha) \sigma^i\left(b_{j-i} N_{j-i}(\alpha)\right)\right) x^j+\\
 &  \dsum{j=0}{n-1}\left(\dsum{i=j+1}{n-1} a_i N_i(\alpha) \sigma^i\left(b_{n-i+j} N_{n+j-i}(\alpha)\right)\right) x^{j+n} . \\
\end{array}
$$

As $x^{j+n}=\sigma^j(\lambda) x^j (\mod x^n-\lambda)$, we have 
$$
\begin{array}{rl}
\varphi_\alpha(f(x))\varphi_\alpha(g(x)) & 
=\dsum{j=0}{n-1}\left(\dsum{i=0}{j} a_i \sigma^i\left(b_{j-i}\right) N_i(\alpha) \sigma^i\left(N_{j-i}(\alpha)\right) 
 \right) x^j+\\ 

& \dsum{j=0}{n-1}\left(\dsum{i=j+1}{n-1} a_i \sigma^i\left(b_{n-i+j}\right) N_i(\alpha) \sigma^i\left(N_{n+j-i}(\alpha)\right) \sigma^j(\lambda)\right) x^j .\\
\end{array}
$$
 According to \cite[Proposition 2.1]{Cherchem2016} we  have  $N_i(\alpha) \sigma^i\left(N_{j-i}(\alpha)\right)=N_j(\alpha),$ and so
$$N_i(\alpha) \sigma^i\left(N_{n+j-i}(\alpha)\right) \sigma^j(\lambda)= N_{j+n}(\alpha) \frac{ \sigma^j(\mu)}{\left(\sigma^j\left(N_n(\alpha)\right)\right)}= \sigma^j(\mu) N_j(\alpha), \ \text{since }   \lambda  N_{n}(\alpha)  = \mu .
$$
It follows that 
$$
\begin{array}{rl}
  \varphi_\alpha(f(x))\varphi_\alpha(g(x))&
  =\dsum{j=0}{n-1}\left(\dsum{i=0}{j} a_i \sigma^i\left(b_{j-i}\right)+\dsum{i=j+1}{n-1} a_i \sigma^i\left(b_{n-i+j}\right) \sigma^j(\mu) \right) N_j(\alpha) x^j     \\
     &  =\varphi_\alpha(f(x)g(x)),
\end{array}
$$
which means that $ \varphi_{\alpha}$ is a morphism of an $\Fq$-algebra.
           
 Finally we observe that for any $f(x)\in \mathbb{F}_q[x,\sigma] /\langle x^n-\mu \rangle$, $f(x) $ and $ f(\alpha x)$ have the same weight. Therefore, $ \varphi_{\alpha}$ preserves the Hamming weight and so it is an $\Fq$-algebra isometry with respect to Hamming distance.

        \item (2) $\Rightarrow$ (3). Suppose there exists an $\alpha \in \mathbb{F}_q^*$ such that the map 
        $$\varphi: \mathbb{F}_q[x,\sigma] /\langle x^n-\mu\rangle \to \mathbb{F}_q[x,\sigma] /\langle x^n-\lambda\rangle, \ \ f(x) \mapsto f(\alpha x)$$
        is an  $\mathbb{F}_q$-algebra isometry  with respect to the Hamming distance. From the fact that $ \mu=\varphi_{\alpha}(\mu) $ and $ x^n= \lambda \mod (x^n-\lambda)$  we obtain 
        
$$
\mu=\varphi_{\alpha}(\mu)= \varphi_{\alpha}\left(x^n\right)=\varphi_{\alpha}(x)^n=
(\alpha x)^n=N_{n}(\alpha) x^n= N_{n}(\alpha) \lambda.
$$
This gives  $ \lambda^{-1} \mu = N_{n}(\alpha). $  Let $\xi $ be a primitive element of $ \Fq$ then $\alpha= \xi^i$ for some $ 0\leq i\leq q-1.$ It follows that 

$$ \lambda^{-1} \mu = N_{n}(\xi^i)=  N_{n}(\xi)^i , \ \text{ which means that }  \ \lambda^{-1} \mu \in  \langle  N_{n}(\xi) \rangle. $$

\item (3) $\Rightarrow$ (4). Since $ N_n(\xi)= \xi^{[n]_s} $ we have $ \langle \xi^{[n]_s} \rangle= \langle N_n(\xi) \rangle$ and so $ \lambda^{-1} \mu \in \langle \xi^{[n]_s} \rangle. $
\item (4) $\Rightarrow$ (5).  Since $\mathbb{F}_q^*=\langle \xi \rangle$ is a cyclic group, $\langle\xi^{[n]_s}\rangle$ is the unique subgroup of order $d=\dfrac{q-1}{\operatorname{gcd}([n]_s, q-1)}$ and any subgroup with order dividing $d$ is contained in the subgroup $\langle\xi^{[n]_s}\rangle$. It follows that  $(\lambda \mu^{-1})^d =1 .$
    \item (5) $\Rightarrow$ (1). Let  $\lambda$ and $\mu$ in $\mathbb{F}_q^*$ such that $ (\lambda \mu^{-1})^d =1.$  That means $ \lambda \mu^{-1} $ belongs to the cyclic subgroup  of $ \Fq^{*}$ of order $d.$ However, the unique subgroup of  $ \Fq^{*}$ of order $d$ is  $ \langle \xi^{[n]_s} \rangle$. 
    It follows that    $\lambda^{-1}\mu=\xi^{i[n]_s} \in\langle \xi^{[n]_s}\rangle,$ for some   integer $0\leq i\leq q-1.$ Let $\alpha=  \xi^{i}, $ then $\lambda^{-1}\mu=\alpha^{[n]_s} = N_{n}(\alpha),$ and so $\lambda N_{n}(\alpha) =\mu $ which means that $ \lambda \sim_{n,\sigma} \mu.$
\end{enumerate}

Finally, by the equivalence of (1) and (5), the number of  $(n,\sigma)$-equivalence classes in $\mathbb{F}_q^*$ is equal to the index of  $\langle\xi^{[n]_s}\rangle$ in $\mathbb{F}_q^*$, which is given by 
$$
\left|\mathbb{F}_q^* \ : \ \langle\xi^{[n]_s}\rangle\right|=\frac{q-1}{\left|\langle\xi^{[n]_s}\rangle\right|}=\dfrac{q-1}{\frac{q-1}{\operatorname{gcd}([n]_s, q-1)}}=\operatorname{gcd}([n]_s, q-1) .
$$    
\qed \end{proof} 

\begin{corollary}\label{Cor.1}
    Let $\lambda, \mu\in \Fq^{*}$ be such that $ \lambda \sim_{n,\sigma} \mu $ i.e.,  $ \lambda N_{n}(\alpha)= \mu$ for some $ \alpha \in \Fq^{*}.$ Then the   skew $( \sigma,\lambda)$-constacyclic codes of length $n$ over $\Fq$ are equivalent to skew $(\sigma,\mu)$-constacyclic codes of length $n$ over $\Fq$ via the following $\Fq$-algebra isomorphism  $\varphi_{\alpha},$ which preserves the Hamming weight.

\begin{equation}
  	\begin{array}{cccc}
  	\varphi_{\alpha}:& \mathbb{F}_q[x,\sigma] /\langle x^n-\mu\rangle  &\longrightarrow & \mathbb{F}_q[x,\sigma] /\langle x^n-\lambda\rangle, \\ & & & \\
  	& f(x)& \longmapsto &  f(\alpha x)  
  	\end{array}
  	\end{equation} 
\end{corollary}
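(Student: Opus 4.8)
The plan is to deduce the corollary directly from Theorem \ref{Th.1} together with the ideal-theoretic characterization of skew constacyclic codes in Theorem \ref{T2.3}, so that essentially no new computation is required. The implication (1) $\Rightarrow$ (2) of Theorem \ref{Th.1} already supplies, for the given $\alpha$ with $\lambda N_n(\alpha)=\mu$, an $\Fq$-algebra isomorphism $\varphi_\alpha$ between the two quotient rings which moreover preserves the Hamming weight; the only remaining task is to reinterpret this statement at the level of codes.

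First I would recall from Theorem \ref{T2.3} that a skew $(\sigma,\mu)$-constacyclic code $C$ of length $n$ is exactly a left ideal of $\Fq[x,\sigma]/\langle x^n-\mu\rangle$. Since $\varphi_\alpha$ is an $\Fq$-algebra isomorphism, it carries left ideals to left ideals; hence $\varphi_\alpha(C)$ is a left ideal of $\Fq[x,\sigma]/\langle x^n-\lambda\rangle$, i.e. a skew $(\sigma,\lambda)$-constacyclic code of length $n$, and $\varphi_\alpha$ establishes a bijection between the two families of codes.

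Next I would verify that $\varphi_\alpha$ realizes an equivalence in the sense of the definition, namely that it is given by a monomial matrix acting on coordinate vectors. Identifying a class $f(x)=\sum_{i=0}^{n-1}a_i x^i$ with its coefficient vector $(a_0,\ldots,a_{n-1})\in\Fq^n$ and using $\varphi_\alpha(x^i)=N_i(\alpha)x^i$, the map $\varphi_\alpha$ becomes
$$(a_0,a_1,\ldots,a_{n-1})\longmapsto \left(a_0,\,N_1(\alpha)a_1,\,\ldots,\,N_{n-1}(\alpha)a_{n-1}\right),$$
that is, multiplication by the diagonal matrix $P=\diag\!\left(1,N_1(\alpha),\ldots,N_{n-1}(\alpha)\right)$. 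Because $\alpha\in\Fq^*$ we have $N_i(\alpha)=\alpha^{[i]_s}\neq 0$ for every $i$, so $P$ has nonzero diagonal entries and is therefore a monomial matrix. Thus $\varphi_\alpha(C)=C\,P$, which is precisely the required form, and since $\varphi_\alpha$ preserves the Hamming weight it is an isometry; consequently the two codes share all the invariants of equivalent codes.

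I do not expect a genuine obstacle here, as Theorem \ref{Th.1} already carries the analytic content. The one point needing care is the bookkeeping that converts the abstract $\Fq$-algebra isometry into the monomial-matrix formulation demanded by the definition of equivalent codes; this reduces to the observation that a weight-preserving coordinatewise rescaling is exactly multiplication by a diagonal monomial matrix.
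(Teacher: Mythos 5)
Your proposal is correct and follows essentially the route the paper intends: the paper states this corollary without a separate proof, treating it as an immediate consequence of the implication (1) $\Rightarrow$ (2) of Theorem \ref{Th.1} together with the left-ideal characterization of skew constacyclic codes in Theorem \ref{T2.3}. Your additional bookkeeping step---identifying $\varphi_{\alpha}$ on coefficient vectors with multiplication by the diagonal monomial matrix $\diag\left(1, N_1(\alpha), \ldots, N_{n-1}(\alpha)\right)$, whose entries are nonzero since $N_i(\alpha)=\alpha^{[i]_s}\neq 0$---is exactly the detail the paper leaves implicit in invoking its definition of code equivalence.
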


\begin{remark}[ Theorem 1, \cite{Boulanouar2021}]
The result of the above corollary generalizes the result  \cite[Theorem 1]{Boulanouar2021}, where  the authors proved  that the skew $(\sigma,\lambda)$-constacyclic  codes are equivalent to skew cyclic codes if there is $\alpha \in \Fq^*$ such that $ \lambda = N_n(\alpha^{-1})$,  and to  skew  negacyclic codes if $ \lambda = -N_n(\alpha^{-1})$. This  corresponds to the special case $\mu =1$ ( or $\mu=-1$ ) and   $ \lambda \sim_{n,\sigma} 1$  ( or $\lambda \sim_{n,\sigma} -1 $) in our result.  
\end{remark}

\begin{corollary} 
Let  $\sigma \in \aut(\Fq)$ be an automorphism of $\Fq$ such that  $\sigma(a)=a^{p^s}$ for all $a\in \Fq$, and let $n$ be a positive integer such that $ \gcd([n]_s, q-1)=1$. Then any two nonzero elements $\lambda$ and $\mu$ of $\mathbb{F}_q^*$ are $(n,\sigma)$-equivalent to each other, i.e., $\lambda N_n(\alpha)=\mu$ for an $ \alpha\in \mathbb{F}_q^*$, and the map $\varphi_{_\alpha}: \mathbb{F}_q[x,\sigma] /\langle x^n-\mu\rangle \rightarrow \mathbb{F}_q[x,\sigma] /\langle x^n-\lambda\rangle$ which maps $f(x)$ to $f(\alpha x)$ is an $\mathbb{F}_q$-algebra isometry with respect to the Hamming distance.
\end{corollary}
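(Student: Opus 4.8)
The plan is to read this off as a direct specialization of Theorem~\ref{Th.1}, since the hypothesis $\gcd([n]_s,q-1)=1$ is precisely the condition that collapses the set of equivalence classes. By the final assertion of Theorem~\ref{Th.1}, the number of $(n,\sigma)$-equivalence classes in $\Fq^*$ equals $\gcd([n]_s,q-1)=1$, so there is a single class and hence every pair $\lambda,\mu\in\Fq^*$ is automatically $(n,\sigma)$-equivalent. The entire content of the corollary is therefore a one-line consequence of the counting statement already proved.

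To exhibit the witness $\alpha$ and the isometry $\varphi_\alpha$ explicitly rather than merely asserting their existence, I would instead chase the chain of implications $(5)\Rightarrow(1)\Rightarrow(2)$ from Theorem~\ref{Th.1}. First I would compute the integer $d=\frac{q-1}{\gcd([n]_s,q-1)}$ appearing in statement $(5)$; under the present hypothesis this is simply $d=q-1$. Next, for arbitrary $\lambda,\mu\in\Fq^*$ the element $\lambda^{-1}\mu$ lies in the multiplicative group $\Fq^*$ of order $q-1$, so Lagrange's theorem gives $(\lambda^{-1}\mu)^{d}=(\lambda^{-1}\mu)^{q-1}=1$. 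This is exactly statement $(5)$, which thus holds unconditionally in this setting. Invoking $(5)\Rightarrow(1)$ yields $\lambda\sim_{n,\sigma}\mu$, i.e.\ the existence of some $\alpha\in\Fq^*$ with $\lambda N_n(\alpha)=\mu$, and invoking $(1)\Rightarrow(2)$ supplies, for this same $\alpha$, the map $\varphi_\alpha\colon f(x)\mapsto f(\alpha x)$ as an $\Fq$-algebra isometry for the Hamming distance, completing the argument.

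I anticipate no genuine obstacle: the statement is an immediate specialization, and the only point worth flagging is the bookkeeping observation that forcing $\gcd([n]_s,q-1)=1$ makes $d=q-1$, after which the defining property of the order of the cyclic group $\Fq^*$ renders condition $(5)$ of Theorem~\ref{Th.1} valid for \emph{every} pair $(\lambda,\mu)$ with no further hypothesis.
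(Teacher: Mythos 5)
Your proposal is correct and matches the paper's own argument, which is exactly the one-line specialization in your first paragraph: since the number of $(n,\sigma)$-equivalence classes equals $\gcd([n]_s,q-1)=1$, all nonzero elements fall into a single class. Your additional chase through $(5)\Rightarrow(1)\Rightarrow(2)$ with $d=q-1$ and Lagrange's theorem is just a more explicit unpacking of the same Theorem~\ref{Th.1} machinery (and in fact spells out the isometry claim that the paper leaves implicit), not a different route.
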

\begin{proof}
 Since $\operatorname{gcd}([n]_s, q-1)=1$, there is only one $(n,\sigma)$-equivalence class in $\mathbb{F}_q^*.$ In other words, for any $\lambda, \mu \in \mathbb{F}_q^*$, we have  $\lambda \sim_{n,\sigma} \mu$.
\qed \end{proof} 
 The following result was proved in  \cite{Boulanouar2021} by using a basic result from  field theory (see  \cite{Roman1995}) which says that  the equation $ x^n= \xi^i$ has a solution in $\Fq^{*}$ if and only if $ \gcd(n,q-1)$ divides $i.$ In our case we   obtain a similar result by using the fact that we have only one $(n,\sigma)$-equivalence class in $\mathbb{F}_q^*$ when $ \gcd([n]_s, q-1)=1.$
\begin{corollary} [Proposition 1, \cite{Boulanouar2021}]
Let  $\sigma \in \aut(\Fq)$ be an automorphism of $\Fq$ such that  $\sigma(a)=a^{p^s}$ for all $a\in \Fq,$ and  let $n$ be a positive integer such that $ \gcd([n]_s, q-1)=1$. Then all skew  $(\sigma,\lambda)$-constacyclic codes are equivalent to skew cyclic codes.
\end{corollary}

\begin{proposition}

Let  $\sigma \in \aut(\Fq)$ be an automorphism of $\Fq$ such that  $\sigma(a)=a^{p^s}$ for all $a\in \Fq,$ and let $n$ be a positive integer  $d= \gcd([n]_s, q-1).$ For each $\lambda \in \Fq^*$ such that $ \lambda^d=1,$  all skew  $(\sigma,\lambda)$-constacyclic codes are equivalent to skew cyclic codes.
\end{proposition}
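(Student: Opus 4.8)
The plan is to recognize the statement as a direct specialization of Theorem~\ref{Th.1} and Corollary~\ref{Cor.1}, rather than proving anything new from scratch. The key observation is that \emph{skew cyclic codes} of length $n$ are exactly the skew $(\sigma,1)$-constacyclic codes, i.e.\ the left ideals of $\Fq[x,\sigma]/\langle x^n-1\rangle$. Hence the assertion ``every skew $(\sigma,\lambda)$-constacyclic code is equivalent to a skew cyclic code'' is, by Corollary~\ref{Cor.1} taken with $\mu=1$, equivalent to the single scalar relation $\lambda \sim_{n,\sigma} 1$. So the first step I would carry out is this reduction: it suffices to prove $\lambda\sim_{n,\sigma}1$.

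The second step is to settle $\lambda\sim_{n,\sigma}1$ using the chain of equivalences in Theorem~\ref{Th.1} with $\mu=1$. By the equivalence of statements (1) and (5) there, $\lambda\sim_{n,\sigma}1$ holds precisely when the power condition $\lambda^{d}=1$ is satisfied, which is exactly the hypothesis placed on $\lambda$; alternatively, statement (4) says $\lambda^{-1}$ must lie in the cyclic subgroup $\langle\xi^{[n]_s}\rangle$ of $\Fq^*$, and membership in this unique subgroup of the prescribed order is encoded by the same power condition on $\lambda$. Once $\lambda\sim_{n,\sigma}1$ is in hand, there is an $\alpha\in\Fq^*$ with $\lambda N_n(\alpha)=1$, and the map $\varphi_\alpha$ of Corollary~\ref{Cor.1} is an $\Fq$-algebra isometry $\Fq[x,\sigma]/\langle x^n-1\rangle\to\Fq[x,\sigma]/\langle x^n-\lambda\rangle$ preserving Hamming weight.

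I do not expect a genuine obstacle, since everything is a consequence of results already established above; the proof is essentially one line of reference-chasing. The one point I would be careful about is the bookkeeping of orders attached to $[n]_s$: because the ambient isometry $\varphi_\alpha$ is a single bijection carrying left ideals to left ideals, it transports \emph{every} skew cyclic code to a skew $(\sigma,\lambda)$-constacyclic code and back simultaneously, so the quantifier ``all'' costs nothing once the one scalar $\alpha$ is produced. The only place a slip could occur is in matching the exponent in the hypothesis $\lambda^{d}=1$ to the exponent appearing in Theorem~\ref{Th.1}(5) and in identifying $\langle\xi^{[n]_s}\rangle$ as the unique subgroup of $\Fq^*$ of its order; I would verify this index computation explicitly to be sure the relation used is $\lambda\sim_{n,\sigma}1$ with $\mu=1$ and not its reciprocal.
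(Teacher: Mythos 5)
Your approach is the same as the paper's: the published proof is literally ``take $\mu=1$ in statement (5) of Theorem~\ref{Th.1}'' together with Corollary~\ref{Cor.1}. However, the one point you flagged as needing verification --- matching the exponent in the hypothesis $\lambda^{d}=1$ with the exponent in Theorem~\ref{Th.1}(5) --- is exactly where the argument breaks, and your assertion that the hypothesis ``is exactly'' the condition of statement (5) is false under the stated definitions. The proposition defines $d=\gcd([n]_s,q-1)$, whereas statement (5) of Theorem~\ref{Th.1} uses $d=\frac{q-1}{\gcd([n]_s,q-1)}$; these are different integers in general. Concretely, take $q=16$, $\sigma(a)=a^{4}$ (so $s=2$) and $n=2$. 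Then $[2]_2=\frac{2^{4}-1}{2^{2}-1}=5$, so the proposition's $d=\gcd(5,15)=5$, while the exponent in Theorem~\ref{Th.1}(5) is $15/5=3$. Any $\lambda\in\Fq^{*}$ of order $5$ satisfies the hypothesis $\lambda^{5}=1$, yet $\lambda^{3}\neq 1$, so by the equivalence (1)$\Leftrightarrow$(5) of Theorem~\ref{Th.1} we have $\lambda\not\sim_{n,\sigma}1$: the class of $1$ is the subgroup $\langle\xi^{[n]_s}\rangle$ of order $3$, which does not contain $\lambda$. Hence the reduction ``$\lambda^{d}=1\Rightarrow\lambda\sim_{n,\sigma}1$'' fails as literally stated, and with it the whole proof.

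To be fair, this defect is inherited from the paper itself: the paper's proof makes the identical silent identification of the two $d$'s, so the statement almost certainly carries a typo and should define $d=\frac{q-1}{\gcd([n]_s,q-1)}$. Under that corrected reading your argument (and the paper's) is complete: $\lambda^{d}=(\lambda^{-1})^{d}=1$ gives $\lambda\sim_{n,\sigma}1$ by (5)$\Rightarrow$(1), and Corollary~\ref{Cor.1} with $\mu=1$ transports every skew $(\sigma,\lambda)$-constacyclic code to a skew cyclic code through the weight-preserving isomorphism $\varphi_{\alpha}$; your remark that a single $\alpha$ handles all codes simultaneously is correct. Note also that the counterexample above breaks only this proof route, not necessarily the literal statement: in that example $N_2(a)=a^{5}$ is never equal to $\lambda$, so $x^{2}-\lambda$ has no degree-one right divisors and the only $(\sigma,\lambda)$-constacyclic codes are the trivial ones, which are vacuously equivalent to cyclic codes. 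The lesson is that the verification you deferred (``I would verify this index computation explicitly'') was not optional bookkeeping: carrying it out is precisely what reveals that the two exponents disagree.
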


\begin{proof}
 In the statement (5) of Theorem \ref{Th.1}, we take $ \mu=1.$ By the fact that $ \lambda^{d}= (\lambda^{-1})^d=1, $ the result holds.
\qed \end{proof}

\section{Skew constacyclic codes over $\mathbb{F}_{2^r}$}  \label{S4}

Let $q=2^r$ and  $\sigma\in \aut(\Fq)$ be  the \textbf{Frobenius  automorphism} of  $\mathbb{F}_{q},$ i.e., $ \sigma(a)=a^2$ for all $a\in \mathbb{F}_{q}.$   Note that the order of $\sigma$ is $r.$  In the following theorem we investigate equivalence among $(\sigma,\delta)$-constacyclic codes of length $n$ over $\mathbb{F}_{q}.$ 

\begin{theorem} \label{Thh.5}
    Let $ q=2^r$ and  $\sigma \in \aut(\Fq)$ be the Frobenuis automorphism of $\Fq.$  Let $C$ be a $(\sigma,\lambda)$-constacyclic code  of length  $ n$ over $\Fq.$ 
    The number of $(n,\sigma)$-equivalence  classes is equal  $ 2^{\gcd(n,r)}-1$,  and 
    \begin{enumerate}
        \item If $ \gcd(n, r)=1,$ then  $C$  is  equivalent to a  cyclic code of length $n$ over $\Fq.$
    
        \item  If $\gcd(n,r)= d\neq 1$, then  there is a unique $ 0\leq j\leq 2^d-2 $ such that $ C$ is equivalent to  a skew $(\sigma, \xi^j)$-constacyclic code, where $\xi$ is a primitive element of $\Fq.$ 
    \end{enumerate}
\end{theorem}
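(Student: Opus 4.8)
\section*{Proof proposal}

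The plan is to route everything through Theorem \ref{Th.1}, which already furnishes that the number of $(n,\sigma)$-equivalence classes equals $\gcd([n]_s, q-1)$, together with Corollary \ref{Cor.1}, which turns an equivalence $\lambda \sim_{n,\sigma} \mu$ into a Hamming-isometric identification of the corresponding codes. The first task is therefore to evaluate $[n]_s$ for the Frobenius automorphism and to simplify the gcd. Since $\sigma(a)=a^2$ corresponds to $p=2$ and $s=1$, the definition $[n]_s=(p^{sn}-1)/(p^s-1)$ collapses to $[n]_s=2^n-1$, whence the class count is $\gcd(2^n-1,\,2^r-1)$. The key arithmetic ingredient is then the classical identity $\gcd(2^n-1,\,2^r-1)=2^{\gcd(n,r)}-1$, which I would establish (or cite) via the Euclidean-algorithm reduction $2^n-1\equiv 2^{\,n\bmod r}-1 \pmod{2^r-1}$. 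This yields the stated count $2^{\gcd(n,r)}-1$ at once.

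For part (1), I would observe that $\gcd(n,r)=1$ forces $2^{\gcd(n,r)}-1=1$, so $\mathbb{F}_q^*$ has a single $(n,\sigma)$-equivalence class; in particular $\lambda \sim_{n,\sigma} 1$. By Corollary \ref{Cor.1} the code $C$ is then equivalent, via a weight-preserving $\varphi_\alpha$, to a skew $(\sigma,1)$-constacyclic code $C'$, i.e.\ a skew cyclic code. The only genuinely non-formal step of part (1) is upgrading \emph{skew} cyclic to \emph{ordinary} cyclic: here I would apply Theorem \ref{cyclic} to $C'$ with $m=r$ the order of $\sigma$, with $\lambda=1\in\Fix_{\sigma}(\Fq)$, and with $\gcd(m,n)=\gcd(r,n)=1$. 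This precisely delivers that $C'$ is a $1$-constacyclic (ordinary cyclic) code, so $C$ is equivalent to a cyclic code.

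For part (2), with $d=\gcd(n,r)\neq 1$ there are $2^d-1$ classes, and the strategy is to exhibit $\{1,\xi,\xi^2,\ldots,\xi^{2^d-2}\}$ as a complete, irredundant system of representatives for the cosets of $\langle \xi^{[n]_s}\rangle$ in $\mathbb{F}_q^*$. Using $\langle \xi^{[n]_s}\rangle=\langle \xi^{\gcd([n]_s,q-1)}\rangle=\langle \xi^{2^d-1}\rangle$, two powers $\xi^i,\xi^j$ with $0\le i,j\le 2^d-2$ lie in the same class iff $(2^d-1)\mid(i-j)$, which the range constraint forces to $i=j$. As there are exactly $2^d-1$ such powers and exactly $2^d-1$ classes, each $\lambda\in\mathbb{F}_q^*$ is equivalent to a unique $\xi^j$; Corollary \ref{Cor.1} then identifies $C$ with the skew $(\sigma,\xi^j)$-constacyclic codes, and uniqueness of $j$ follows from the disjointness of the classes.

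I expect the main obstacle to be essentially bookkeeping rather than conceptual: establishing the gcd identity cleanly and justifying the coset-representative count in part (2). The one subtle point worth care is the cyclic-versus-skew-cyclic distinction in part (1), which is resolved only because the hypothesis $\gcd(n,r)=1$ feeds exactly into the coprimality requirement of Theorem \ref{cyclic}.
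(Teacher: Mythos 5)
Your proposal is correct and follows essentially the same route as the paper: invoking Theorem \ref{Th.1} with $[n]_1 = 2^n-1$ and the identity $\gcd(2^n-1,2^r-1)=2^{\gcd(n,r)}-1$ for the class count, combining the single-class observation with Theorem \ref{cyclic} (via $1\in\Fix_{\sigma}(\Fq)$ and $\gcd(n,r)=1$) for part (1), and using the coset decomposition of $\mathbb{F}_q^*$ modulo $\langle \xi^{[n]_s}\rangle=\langle \xi^{2^d-1}\rangle$ with representatives $1,\xi,\ldots,\xi^{2^d-2}$ for part (2). Your write-up is in fact slightly more careful than the paper's on two points: you make explicit why the representatives $\xi^j$ are pairwise inequivalent, and you state the coprimality hypothesis of part (1) correctly where the paper's proof contains typos.
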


\begin{proof}
 We have    $ \gcd([n]_1, q-1)= \gcd(2^n-1, 2^r-1)= 2^{\gcd(n,r)}-1.$ Then according to Theorem \ref{Th.1}, the number of $ (n,\sigma)$-equivalence classes is  $2^{\gcd(n,r)}-1.$
 \begin{enumerate}
 \item  If $ n=1 \mod r$, then $\gcd([n]_1, q-1)=1 $ and so we have only one equivalence class which means that all  $(\sigma,\lambda)$-constacylic codes  of length  $ n$ over $\Fq$ are equivalent to skew cyclic codes of length $n$ over $\Fq.$ On the other hand the order of $\sigma$ is $r$ and $\gcd(n,2k)=1,$ then by Theorem \ref{cyclic}, $C$ is equivalent to a cyclic code  of length $n$  over $\Fq.$ 
 
        \item If  $\gcd(n,r)= d\neq 1,$ and  $\gcd([n]_1, q-1)=2^d-1, $
        we can decompose $\Fq^{*}$   into a disjoint union of cosets of the subgroup $\langle \xi^{2^d-1} \rangle $ as follows:
        $$  \Fq^{*}= \langle \xi \rangle= \langle N_n(\xi) \rangle \cup \xi \langle N_n(\xi) \rangle \cup \ldots \cup \xi^{2^d-2} \langle N_n(\xi) \rangle.  $$
        
       So the element  $\lambda \in \Fq^*$  belongs to exactly one of the cosets, i.e., there is a unique integer $ 0\leq j\leq 2^d-2 ,$ such that $ \lambda \in  \xi^j \langle N_n(\xi) \rangle$ and so $ \lambda$ is $(\sigma,n)$-equivalent to $\xi^j$. 
        
\end{enumerate}
\qed \end{proof}

When $q=2^r$ and $r$ is an even number we have the following result.

\begin{corollary}\label{Corr4}
   Let $q=2^{2k}$ and  let $\sigma$ be the Frobenuis automorphism of $\Fq.$  Then all   $(\sigma,\lambda)$-constacyclic codes of  odd length  $ n$ are 
   equivalent to cyclic codes of length $n$ over $\Fq.$
\end{corollary}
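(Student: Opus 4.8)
The plan is to derive the statement from Theorem~\ref{Thh.5} and Corollary~\ref{Cor.1}, uniformly in the odd length $n$. The first step is purely arithmetic: with $\sigma(a)=a^{2}$ we have $s=1$ and $[n]_1=2^{n}-1$, so
\[
\gcd\bigl([n]_1,q-1\bigr)=\gcd\bigl(2^{n}-1,\,2^{2k}-1\bigr)=2^{\gcd(n,2k)}-1 .
\]
Since $n$ is odd, every common divisor of $n$ and $2k$ is odd, hence $\gcd(n,2k)=\gcd(n,k)=:d$, an odd divisor of $n$. By Theorem~\ref{Thh.5} there are then exactly $2^{d}-1$ equivalence classes, represented by $\xi^{0},\dots,\xi^{2^{d}-2}$ for a primitive $\xi$, and by Corollary~\ref{Cor.1} the map $\varphi_\alpha\colon f(x)\mapsto f(\alpha x)$ is a weight-preserving $\Fq$-algebra isomorphism identifying any skew $(\sigma,\lambda)$-constacyclic code with a skew $(\sigma,\mu)$-constacyclic code whenever $\lambda\sim_{n,\sigma}\mu$. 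Thus it is enough to prove, for a single representative $\mu$ of each class, that every skew $(\sigma,\mu)$-constacyclic code of length $n$ is monomially equivalent to a cyclic code.

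For the principal class ($\mu\sim_{n,\sigma}1$, so after $\varphi_\alpha$ we may take $\mu=1$) I would work with the generator matrix~(\ref{eq10}), whose $i$-th row is the coefficient vector of $x^{i}g(x)$ and therefore carries the Frobenius twist $\sigma^{i}=(\,\cdot\,)^{2^{i}}$ at its nonzero positions. The goal is to cancel these twists by a diagonal monomial matrix $D=\diag(d_0,\dots,d_{n-1})$, together with row rescalings, turning $G$ into the generator matrix of an ordinary cyclic code. Writing the code through the right-evaluation of Definition~\ref{def.2} --- a word $c$ lies in $C$ iff $\sum_\ell c_\ell\,\beta^{\,2^{\ell}-1}=0$ for each root $\beta$ of $g$ --- one sees that each skew check vector $(\beta^{2^{\ell}-1})_{\ell}$ is, coordinatewise, a power of the single element $\beta$; since squaring is bijective in characteristic $2$ and, for any root $\beta$ of $g$ (which satisfies $\beta^{2^{n}-1}=1$, hence has order dividing $2^{d}-1$), the exponents $2^{\ell}-1$ reduced modulo $\ord(\beta)$ are periodic in $\ell$ with period $d$, such a vector is diagonally equivalent to a Vandermonde (cyclic) check vector $(\gamma^{\ell})_{\ell}$. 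This is exactly where $q=2^{2k}$ and $n$ odd are used, through the solvability of the resulting exponent congruences in $\Fq^{*}$.

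The step I expect to be the real obstacle is making a single diagonal $D$ work simultaneously for all roots of $g$ when $C$ has intermediate dimension. The one-generator codes (and, dually, the codimension-one codes) are handled directly, because there is a single geometric generator or check vector; but an intermediate-dimensional code imposes several constraints $\sum_\ell c_\ell\,\beta_i^{2^{\ell}-1}=0$ at once, and one monomial transformation must reconcile all the $\beta_i$ with a single cyclic check matrix. I would reduce this to showing that the skew-root set of any right divisor of $x^{n}-\mu$ is, up to a common monomial change of coordinates, an ordinary root set of unity type, a reduction controlled by the norm equation $\beta^{[n]_1}=\mu$ of Theorem~\ref{Th.1}. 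Finally, for the non-principal classes $\mu\not\sim_{n,\sigma}1$ (which arise only when $d>1$) the reduction to $\mu=1$ is unavailable; there I would prove that $x^{n}-\mu$ admits no nontrivial right divisor --- as already happens for $n=3$ over $\mathbb{F}_{64}$, where the divisibility relations force the unsolvable condition $g_1^{\,7}=\mu^{2}$ --- so that the only skew $(\sigma,\mu)$-constacyclic codes are $\{0\}$ and $\Fq^{n}$, which are cyclic. Establishing this last triviality, together with the intermediate-dimensional consistency, uniformly for all odd $n$ is the crux of the whole argument.
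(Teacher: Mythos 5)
Your arithmetic is more careful than the paper's, and that is precisely why your route became so hard: for odd $n$ you correctly get $\gcd(n,2k)=\gcd(n,k)=d$, which need not be $1$ (take $q=2^6$, $n=3$: $d=3$ and there are $2^3-1=7$ classes). The paper's own proof is two lines and silently assumes what you refused to assume: it computes $\gcd([n]_1,q-1)=2^{\gcd(n,2k)}-1=2^1-1=1$ ``since $n$ is odd'', concludes via Theorem \ref{Th.1} that there is a single $(n,\sigma)$-equivalence class, so $C$ is equivalent to a skew cyclic code through $\varphi_\alpha$ (Corollary \ref{Cor.1}), and then invokes Theorem \ref{cyclic} with $\lambda=1\in\Fix_{\sigma}(\Fq)$ and $\gcd(n,2k)=1$ to conclude that this skew cyclic code \emph{is} a cyclic code. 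That argument is complete exactly when $\gcd(n,k)=1$ (e.g.\ over $\mathbb{F}_4$, where $k=1$, which is the corollary that follows); when $\gcd(n,k)=d>1$ both steps fail, as you observed. So you have in effect exposed a gap in the paper's proof of the corollary as stated, though not necessarily a counterexample to the statement itself.

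That said, your proposal is not a proof either, and you concede as much. Concretely: (i) for the principal class, your untwisting of the generator matrix (\ref{eq10}) by a diagonal monomial matrix rests on describing $C$ as the kernel of right-evaluations at the roots of $g$, which presupposes that $g$ splits into linear skew factors and that these evaluations cut out the code --- neither holds in general in $\Fq[x,\sigma]$ --- and you explicitly leave the ``intermediate-dimensional consistency'' unresolved; moreover, in the regime where the paper's hypotheses do hold, this machinery is unnecessary, since Theorem \ref{cyclic} gives the stronger conclusion that the skew cyclic code is literally cyclic, not merely monomially equivalent to one. (ii) For the non-principal classes, your claim that $x^n-\mu$ has only trivial right divisors is verified only for $n=3$ over $\mathbb{F}_{64}$ (that computation, forcing $g_1^{7}=\mu^{2}$, is correct) and remains a conjecture for general odd $n$ with $\gcd(n,k)>1$; Theorem \ref{Thh.5}(2) only places such a code in some class $\xi^{j}\langle N_n(\xi)\rangle$, and nothing in the paper or in your sketch settles whether those codes are equivalent to cyclic ones. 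The clean repair is to add the hypothesis $\gcd(n,k)=1$ (equivalently $\gcd(n,2k)=1$) and then run the paper's two-step argument verbatim.
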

\begin{proof}
Let $C$ be a  $(\sigma,\lambda)$-constacyclic code of  odd length  $ n$ over $\Fq.$  On the one hand we have  $ \gcd([n]_1, 2k)= 2^{\gcd(n,2k)}-1= 2^1-1=1,$ hence $C$ is  equivalent to a skew cyclic code of length $n$ over $\Fq.$  On the other hand the order of $\sigma$ is $2k$ and $\gcd(n,2k)=1$. Then by Theorem \ref{cyclic}, $C$ is equivalent to a cyclic code  of length $n$  over $\Fq.$
\qed \end{proof} 
\begin{corollary}
      All  skew   $(\sigma,\lambda)$-constacylic codes of  odd length   $ n$ over $\mathbb{F}_{4}$ are 
   equivalent to cyclic codes.
\end{corollary}

\section{Examples}  \label{S5}
\noindent In this section, we give some examples of  applications of the theory developed here. 

\begin{example}[Skew constacyclic codes over $\mathbb{F}_4$]
Let $q = 4$, and $\xi$ be a primitive element of $\mathbb{F}_4$. In this example, we are interested in skew constacyclic codes of length $n$ over $\mathbb{F}_4$. Since the degree of the extension field $\mathbb{F}_4/\mathbb{F}_2$ is $2$, we have $\text{Aut}(\mathbb{F}_4) = \{\text{id}, \theta\}$, where $\theta$ is the Frobenius automorphism of $\mathbb{F}_4$, i.e., $\theta(a) = a^2$ for all $a\in \mathbb{F}_4$. According to Theorem \ref{Th.1}, the number of $(n, \theta)$-equivalence classes is equal to $\gcd([n]_1, q-1) = \gcd(2^n-1, 3)$. By Theorem \ref{Thh.5}, we have two cases:

\begin{enumerate}
    \item \textbf{If $n$ is odd,} we have $\gcd(2^n-1, 2^2-1) = 2^{\gcd(n,2)}-1 = 2-1 = 1$, and so there is only one $(n, \theta)$-equivalence class. Since $n$ is coprime with the order of $\theta$, all skew $(\theta, \lambda)$-constacyclic codes of odd length over $\mathbb{F}_4$ are $(n, \theta)$-equivalent to cyclic codes. This means that there exists an element $\alpha$ in $\mathbb{F}_4^*$ such that $\alpha^{[n]_1} \lambda = 1$. Studying skew $(\lambda, \theta)$-constacyclic codes, in this case, is reduced to studying cyclic codes of length $n$ over $\mathbb{F}_4$.

     \item \textbf{If $n$ is even}, we have $\gcd(2^n-1, 2^2-1) = 2^{\gcd(n,2)}-1 = 3$, then we have $3$ $(n, \theta)$-equivalence classes. Let $\lambda \in \mathbb{F}_{4}^*$, then by Theorem \ref{Thh.5}, $\lambda$ is $(n, \theta)$-equivalent to $1, \xi,$ or $\xi^2$. To construct skew $(\lambda, \theta)$-constacyclic codes, in this case, we need to study three classes, which are skew $\theta$-cyclic codes, skew $(\xi,\theta)$-constacyclic, and $(\xi^2,\theta)$-constacyclic codes.
\end{enumerate}
\end{example}

\begin{example}[Skew constacyclic codes over $\mathbb{F}_8$]
Let $q = 8$, and $\xi$ be a primitive element of $\mathbb{F}_8$. In this example, we are interested in skew constacyclic codes of length $n$ over $\mathbb{F}_8$. Since the degree of the extension field $\mathbb{F}_8/\mathbb{F}_2$ is $3$, we have $\text{Aut}(\mathbb{F}_8) = \{\text{id}, \theta, \theta^2\}$, where $\theta$ is the Frobenius automorphism of $\mathbb{F}_8$, i.e., $\theta(a) = a^2$ for all $a\in \mathbb{F}_8$. In this case, we need to study skew $(\theta, \lambda)$-constacyclic and skew $(\theta^2, \lambda)$-constacyclic codes of length $n$ over $\mathbb{F}_8$.

\begin{enumerate}
    \item \textbf{Case 1: Skew $(\theta, \lambda)$-constacyclic codes:}
According to Theorem \ref{Th.1}, the number of $(n, \theta)$-equivalence classes is equal to $\gcd([n]_1, q-1) = \gcd(2^n-1, 7)$. By Theorem \ref{Thh.5}, we have two cases:

\begin{enumerate}
    \item \textbf{If $\gcd(n,3)=1$,} we have $\gcd(2^n-1, 2^3-1) = 2^{\gcd(n,3)}-1 = 2-1 = 1$, and so there is only one $(n, \theta)$-equivalence class. Since $n$ is coprime to the order of $\theta$, the order of $\theta$ is $3$, then all skew $(\theta, \lambda)$-constacyclic codes of odd length over $\mathbb{F}_8$ are $(n, \theta)$-equivalent to cyclic codes. This means that there exists an element $\alpha$ in $\mathbb{F}_8^*$ such that $\alpha^{[n]_1} \lambda = 1$. Studying skew $(\lambda, \theta)$-constacyclic codes, in this case, is reduced to studying cyclic codes of length $n$ over $\mathbb{F}_8$.

    \item \textbf{If $\gcd(n,3)=3$}, we have $\gcd(2^n-1, 2^3-1) = 2^{\gcd(n,3)}-1 = 7$. Then we have $7$ $(n, \theta)$-equivalence classes. So, in this case, we need to consider all possible skew $(\theta, \lambda)$-constacyclic codes for any $\lambda\in \mathbb{F}_8^*$.
\end{enumerate}

\item \textbf{Case 2: Skew $(\theta^2, \lambda)$-constacyclic codes:}
According to Theorem \ref{Th.1}, the number of $(n, \theta^2)$-equivalence classes is equal to $\gcd([n]_2, q-1) = \gcd\left(\frac{2^{2n}-1}{2^2-1}, 7\right)$. In this case, we performed computations  using Magma software, where we calculated $\operatorname{gcd}([n]_2, q-1)$ for $n$ ranging from $1$ to $100,000,000$. We observed two distinct cases.

\begin{enumerate}
    \item \textbf{If $\gcd(n,3)=1$,} we have $\gcd([n]_2, 7) = 1$, and so there is only one $(n, \theta^2)$-equivalence class. Since $n$ is coprime to the order of $\theta^2$, the order of $\theta^2$ is $3$, then all skew $(\theta^2, \lambda)$-constacyclic codes of odd length over $\mathbb{F}_8$ are $(n, \theta^2)$-equivalent to cyclic codes. This means that there exists an element $\alpha$ in $\mathbb{F}_8^*$ such that $\alpha^{[n]_2} \lambda = 1$. Studying skew $(\lambda, \theta^2)$-constacyclic codes, in this case, is reduced to studying cyclic codes of length $n$ over $\mathbb{F}_8$.

    \item \textbf{If $\gcd(n,3)=3$}, we have $\gcd([n]_2, 7) = 7$. Then we have $7$ $(n, \theta^2)$-equivalence classes. So, in this case, we need to consider all possible skew $(\theta^2, \lambda)$-constacyclic codes for any $\lambda \in \mathbb{F}_8^*$.
\end{enumerate}
\end{enumerate}
\end{example}

\section*{ Conclusion}
In this paper, we introduced the    $(n,\sigma)$-equivalence  relation, where $n$ is the length of the code and $ \sigma$ is an automorphism of the finite field. We computed the number of  $(n,\sigma)$-equivalence  classes, and we give conditions on $ \lambda$ and $\mu$  for which $(\sigma, \lambda)$-constacyclic codes and  $(\sigma, \lambda)$-constacyclic codes to be equivalent using the properties of our $(n,\sigma)$-equivalent  relation. Under some conditions on $n$ and $q$, we proved that skew constacyclic codes are equivalent to cyclic codes. For future work, it would be interesting to introduce the concept of ``skew cyclotomic cosets'' to facilitate the factorization of $x^n-1$ and $x^n-a$ in $\mathbb{F}_q[x, \sigma].$ This may give an efficient way to classify skew constacyclic codes of length $n$ over a finite field $\Fq$.

\end{document}